\newlist{Description}{description}{3}
\setlist[Description]{style=nextline,font =\normalfont}
\newcommand{\R}{\mathbb{R}}
\newcommand{\Rs}{\mathcal{R}}
\newcommand{\itset}[1]{\mathsf{#1}}
\newcommand{\event}{\lozenge}
\newcommand{\always}{\square}
\renewcommand{\next}{\bigcirc}
\newcommand{\true}{\text{true}}
\newcommand{\rmm}[1]{\left< \mathrm{#1} \right>}
\newcommand{\rmms}[2]{\left< \mathrm{#1}_{#2} \right>}
\newtheorem{rem}{Remark}
\newtheorem{theorem}{Theorem}
\newtheorem{exmp}{Example}[section]
\newtheorem{defn}{Definition}[section]
\newtheorem{prob}{Problem}[section]
\DeclareMathOperator*{\argsup}{arg\,sup}
\DeclareMathOperator*{\arginf}{arg\,inf}
\newcommand\Tstrut{\rule{0pt}{2.6ex}}         
\newcommand\Bstrut{\rule[-0.9ex]{0pt}{0pt}}   
\newcommand{\rev}[1]{#1}
\begin{document}

{\centering{\huge Formal synthesis of closed-form sampled-data controllers for nonlinear continuous-time systems under STL specifications.\par}\vspace{3ex}
	{\large Cees F. Verdier$^a$, Niklas Kochdumper$^b$, Matthias Althoff$^b$, and Manuel Mazo Jr.$^a$  \par}\vspace{2ex}
	
	$~^a$ Delft Center for Systems and Control, Delft University of Technology, The Netherlands (e-mail: c.f.verdier@tudelft.nl)\par\vspace{4ex}
	$~^b$ Department of Informatics, Technical University of Munich, 85748 Garching, Germany \par\vspace{4ex}
	Supported by NWO Domain TTW under the CADUSY project \#13852 and the ERC Starting Grant SENTIENT (755953). \par\vspace{4ex}}
{\centering\bfseries Abstract\par}
\smallbreak
We propose a counterexample-guided inductive synthesis framework for the formal synthesis of closed-form sampled-data controllers for nonlinear systems to meet STL specifications \rev{over finite-time trajectories}. Rather than stating the STL specification for a single initial condition, we consider an (infinite and bounded) set of initial conditions. Candidate solutions are proposed using genetic programming, which evolves controllers based on a finite number of simulations. Subsequently, the best candidate is verified using reachability analysis; if the candidate solution does not satisfy the specification, an initial condition violating the specification is extracted as a counterexample. Based on this counterexample, candidate solutions are refined until eventually a solution is found \rev{(or a user-specified number of iterations is met)}. The resulting sampled-data controller is expressed as a closed-form expression, enabling \rev{both interpretability and} the implementation in embedded hardware with limited memory and computation power. The effectiveness of our approach is demonstrated for multiple systems.
\medbreak
\textbf{Keywords:} Formal controller synthesis, computer-aided design, reachability analysis, genetic programming, counterexample-guided inductive synthesis, signal temporal logic.   
\par\vspace{2ex}

\section{Introduction}
Recent years have seen a surge in interest in controller synthesis for temporal logic specifications, realizing complex behavior beyond traditional stability requirements, see, e.g., the recent literature survey in \cite{Belta2019}. Originally stemming from the field of computer science, temporal logic has been used to describe the correctness of complex behaviors of computer systems \cite{Baier2008}. As it originally dealt with finite systems, (bi-)simulation approaches have been proposed to abstract infinite systems to finite systems \cite{Tabuada2009,Belta2017}. However, as a downside, these approaches (e.g., \cite{Reissig2017,Liu2013, Habets2006, Girard2010}) typically suffer from the curse of dimensionality and return controllers in the form of enormous lookup tables \cite{Zapreev2018}. 

Where certain temporal logics such as linear temporal logic reason over traces of finite systems, signal temporal logic (STL) reasons over continuous signals \cite{maler2004}. Besides a Boolean answer to whether the formula is satisfied, quantitative semantics of STL has been introduced \cite{Donze2010,Fainekos2009}, providing a quantititive measure on how robustly a formula is satisfied. These robustness measures enable optimization-based methods for temporal logic, such as model predictive control (MPC) \cite{Raman2015,Sadraddini2015,Farahani2015,Sadraddini2018,Lindemann2017}, optimal trajectory planning \cite{Pant2018}, reinforcement learning \cite{Aksaray2016}, and neural networks \cite{Yaghoubi2019,Liu2021}. Apart from optimization-based methods, other proposed approaches for STL specifications rely on control barrier functions (CBF) \cite{Lindemann2019,Garg2019}. While the work in \cite{Lindemann2019} does not optimize a robustness measure of the STL specification, the computation of the control input for every time step relies on online quadratic optimization. Alternatively, in \cite{Lindemann2017b,Lindemann2019C} the synthesis for a fragment of STL is reformulated to a prescribed performance control problem, resulting in a continuous state feedback control law.

While (bi-)simulation approaches provide feedback strategies for all (admissible) initial conditions, only a limited number of optimization-based approaches consider a set of initial conditions \cite{Belta2019}, including \cite{Raman2015,Farahani2015,Schurmann2018}. In \cite{Sadraddini2018}, tube MPC is used, in which a tube around a nominal initial condition is found for which the robustness measure is guaranteed. Similarly, the control barrier functions in \cite{Lindemann2019} provide a forward invariant set around the initial condition.

In this work, we utilize genetic programming (GP) \cite{Koza1992} and reachability analysis \cite{Althoff2010} to synthesize controllers.
The benefit of genetic programming is that it is able to automatically find a structure for the controller, as the right structure is typically unknown beforehand \cite{Belta2019}. \rev{Moreover, the resulting controllers can be verified using off-the-shelf verification methods and are generally easier to interpret than e.g. neural network controllers or look-up tables in the form of binary decision diagrams (BDDs).} Genetic programming has been used for formal synthesis for reach-avoid problems in \cite{Verdier2018,Verdier2020}, in which controllers and Lyapunov-like functions are automatically synthesized for nonlinear and hybrid systems. Also, reachability analysis has been used in formal controller synthesis for reach-avoid problems, e.g.,  in \cite{Schurmann2018}, MPC is combined with reachability analysis, whereas in \cite{Schuermann2017c,Schurman2020,Ding2011} synthesizes a sequence of optimal control inputs \cite{Ding2011} or linear controllers \cite{Schuermann2017c,Schurman2020,schurmann2017optimal} for a sequence of time intervals.

Regardless, to the best of our knowledge, there are no closed-form controller synthesis methods which guarantee general STL specifications for a set of initial conditions. The goal of this work is to synthesize correct-by-construction closed-form controllers for nonlinear continuous-time systems subject to bounded disturbances for \rev{finite-time} STL specifications. Moreover, we consider a sampled-data implementation of the controller, i.e., the controller output is only updated periodically and is held constant between sampling times. To this end, we propose a framework based on counterexample-guided inductive synthesis (CEGIS) (see e.g. \cite{Solar2006,Ravanbakhsh2015,Abate2017,Raman2015}), combining model checking for STL \cite{Roehm2016}, the recent development of counterexample generation using reachability analysis \cite{Kochdumper2019b}, and genetic programming (GP) \cite{Koza1992}. This CEGIS approach combines a learning step with a formal verification step, in this case GP and reachability analysis, respectively. Within this framework, violations obtained during verification are used to improve the learning process, until a controller which formally satisfies the desired specification is found, \rev{or a user-defined maximum of iterations is met}. \rev{The synthesis of a closed-form sampled-data controller for general STL specifications is NP-complete. Unsurprisingly, the proposed method is not a complete method, i.e. existence of the solution does not guarantee that a solution will be returned in a finite number of iterations. Moreover, as the method relies on simulations, reachability analysis, and SMT solvers, the (offline) computational complexity of the proposed method is significant. However, the high computation time is offline and the method results in an interpretable closed-form sampled-data controller, both enabling digital implementation, that has a small online computational cost and small memory footprint.}

The main contributions of this work are twofold: first of all, we propose a CEGIS framework combining genetic programming with reachability analysis for the synthesis of closed-form sampled-data controllers for STL specifications. To enable reasoning over reachable sets as opposed to singular trajectories, \cite{Roehm2016} introduced reachset temporal logic (RTL) and proposed a sound transformation from STL to RTL. Our second contribution is the definition of quantitative semantics for RTL, and proving that the quantitative semantics is sound and complete. Similar to the quantitative semantics of STL, these quantitative semantics provide a measure of how robustly a formula is satisfied.

\section{Preliminaries}
The set of real positive numbers is denoted by $\R_{\geq 0}$. The power set of a set $S$ is denoted by $2^S$. Finally, an $n$-dimensional zero vector is denoted by $\mathbf{0}_n$. 

\subsection{Signal temporal logic}

We consider specifications expressed in signal temporal logic (STL) \cite{maler2004}, using the following grammar:
\begin{equation}
\varphi := \true \mid   h(s) \geq 0 \mid \neg \varphi \mid \varphi_1 \wedge \varphi_2 \mid \varphi_1 \mathcal{U}_{[a,b]} \varphi_2,
\label{eq:stlgrammar}
\end{equation}
where $\varphi$, $\varphi_1$, $\varphi_2$ are STL formula, and $h(s) \geq 0$ is a predicate over a signal $s: \R_{\geq 0} \rightarrow \R^n$ and a function $h: \R^n \rightarrow \R$. The Boolean operators $\neg$ and $\wedge$ denote negation and conjunction, respectively, and $\mathcal{U}_{[a,b]}$ denotes the bounded \textit{until} operator, i.e. until between $a$ and $b$, \rev{where $a<b$ and $a,b \in \mathbb{Q}_{\geq 0}$. Note that since $a,b \in \mathbb{Q}_{\geq 0}$, the STL formula inherently reasons over finite-time signals.} We can also define other standard (temporal) operators from \eqref{eq:stlgrammar}, such as disjunction $\varphi_1 \vee \varphi_2 := \neg ( \neg \varphi_1 \wedge \neg \varphi_2)$, next $\next_a \varphi:= \true ~ \mathcal{U}_{[a,a]} \varphi$, eventually $\event_{[a,b]} \varphi := \true ~\mathcal{U}_{[a,b]} \varphi$, and always 
$\always_{[a,b]} \varphi := \neg \event_{[a,b]} \neg \varphi$. The satisfaction relation $(s,t) \models \varphi$ indicates that the signal $s$ starting at $t$ satisfies $\varphi$. We consider the same definition of the semantics as in \cite{Roehm2016}, which slightly deviates from e.g. \cite{maler2004} w.r.t. the \textit{until} operator\footnote{In contrast to \cite{maler2004}, in our definition of the \textit{until} operator, $\varphi_1$ and $\varphi_2$ do not have to hold simultaneously}. Since we build upon the results in \cite{Roehm2016}, we have adopted the corresponding definition.
STL is equipped with \textit{quantitative semantics} $\rho(s,\varphi,t)$ that provides a robustness measure of how well a signal $s$ starting at time $t$ satisfies or violates the STL specification \cite{Donze2010,Fainekos2009}. 
If $\rho(s,\varphi,t)$ is negative, lower values imply that $\varphi$ is more strongly violated. Conversely, if $\rho(s,\varphi,t)$ is positive, higher values  imply that $\varphi$ is satisfied more robustly. 

\subsection{Reachset temporal logic}

\label{sec:RTLintro}
Consider a closed-loop system described by:
\rev{
\begin{equation}
\Sigma = \left\{ \begin{array}{rl}
\dot{\xi}(t) = & f_\mathrm{cl}(t,\xi(t),\omega(t)), \\
\xi(0) \in &I, ~\omega(t) \in \Omega,
\end{array} \right.
\label{eq:syscl}
\end{equation}
where $\xi(t) \in \R^n$ denotes the state, $\omega(t) \in \Omega \subset \R^l$ an external disturbance, $I \subset \R^n$ is the set of initial conditions, $I$ and $\Omega$ are compact, and $f:\R_{\geq 0} \times \R^n \times \R^l \rightarrow \R^n$ and $\omega:\R_{\geq 0} \rightarrow \R^l$ are assumed to be Lipschitz continuous.} 
In this work, we are not only interested in the STL performance of a singular trajectory, but rather of the set of all trajectories satisfying system $\Sigma$, defined by
\begin{equation}
\mathcal{S}(\Sigma) := \{\xi: \R_{\geq 0} \rightarrow \R^n \mid \forall t \geq 0: \xi(t) \text{ satisfies } \Sigma \}. 
\end{equation}
Now let us define the reachable set:
\begin{defn}[Reachable set]
Given a system $\Sigma$,
a mapping $R_e: \R_{\geq 0} \rightarrow 2^{\R^n}$ is an \textit{exact }reachable set if and only if: 
\begin{equation}
\forall t \in \R_{\geq 0} :  \{\xi(t) \mid \xi \in \mathcal{S}(\Sigma) \} = R_e(t).
\label{eq:reachsetcon}%
\end{equation}%
A mapping $R: \R_{\geq 0} \rightarrow 2^{\R^n}$ is a reachable set if and only if $\forall t \in \R_{\geq 0}:  R_e(t) \subseteq R(t)$.
\end{defn}%
That is, a reachable set satisfies that $\forall t \in \R_{\geq 0}, \forall \xi \in \mathcal{S}(\Sigma): \xi(t) \in R(t)$.
Reachability analysis tools such as CORA \cite{Cora} can return a sequence of sets
$
\Rs =\Rs_{\{t_0\}}  \Rs_{(t_0,t_1)} \Rs_{\{t_1\}} \Rs_{(t_1,t_2)} \dots  \Rs_{\{t_\mathrm{f}\}},
$
forming a reachable set given by
\begin{equation}
R(t) = \left\{ \begin{array}{rl} 
 \Rs_{\{t_i\}} &\text{if } t= t_i, \\
  \Rs_{(t_i,t_{i+1})} & \text{if } t\in (t_i,t_{i+1}) .
\end{array}
\right.
\label{eq:reachableset}
\end{equation}
The STL semantics over singular trajectories does not directly translate to the evaluation over reachable sets. To be able to reason directly over a reachable set, \cite{Roehm2016} introduced reachset temporal logic (RTL). The RTL fragment relevant for this work is given by:
\begin{align*}
\psi &:= \true \mid h(x) \geq 0 \mid \neg \psi \mid \psi_1 \wedge\psi_2, \\
\phi &:=  \mathcal{A} \psi \mid  \phi_1 \vee \phi_2 \mid \phi_1 \wedge \phi_2 \mid \next_a \phi.
\end{align*}
Here, $\psi$, $\psi_1$, $\psi_2$ are propositional formulae over states $x$ and $\phi$, $\phi_1$, $\phi_2$ formulae over a reachable set $R:\R_{\geq 0} \rightarrow 2^{\R^n}$. Additionally, $\mathcal{A}$ denotes the \textit{all} operator. The semantics is defined as follows:
\begin{align*}
x \models & h(x) \geq 0  &\iff & h(x) \geq 0,\\
x \models & \neg \psi & \iff & x \not \models \psi, \\
x \models & \psi_1 \wedge \psi_2 & \iff & x \models \psi_1 \text{ and } x \models \psi_2, \\
(R,t) \models & \mathcal{A} \psi  &\iff & \forall x \in R(t): x \models \psi, \\
(R,t) \models & \phi_1 \vee \phi_2 & \iff & (R,t) \models \phi_1 \text{ or } (R,t) \models \phi_2, \\
(R,t) \models & \phi_1 \wedge \phi_2 & \iff & (R,t) \models \phi_1 \text{ and } (R,t) \models \phi_2, \\
(R,t) \models & \next_a \phi & \iff & (R,t+a) \models \phi.
\end{align*}
\rev{Consider the following notion:
\begin{defn}[$c$-divisible]
\label{ass:cdiv}
An STL formula $\varphi$ is said to be $c$-divisible, if all interval bounds of the temporal operators of $\varphi$ are divisible by $c$.
\end{defn}
Note that since $a,b \in \mathbb{Q}_{\geq 0}$, there always exists a $c$ such that an STL formula is $c$-divisible.}
Given a $c$-divisible STL formula $\varphi$, the results in \cite[Lemma 2 \& Lemma 4]{Roehm2016} provide a sound transformation $\Upsilon$ to transform STL to RTL:
\begin{theorem}[Sound transformation {\cite[Theorem 1]{Roehm2016}}]
\label{thm:stl2rtl}
Given the system $\Sigma$ in \eqref{eq:syscl}, let $\varphi$ be a $c$-divisible STL formula, and $R(t)$ be the reachable set of $\Sigma$ in the form of \eqref{eq:reachableset} with $t_{i+1}-t_i = c$. The transformation $\Upsilon$ from \cite{Roehm2016}, bringing the STL formula $\varphi$ into an RTL formula $\phi = \Upsilon(\varphi)$, is sound, i.e.:
\begin{equation}
\forall \xi \in \mathcal{S}(\Sigma): (\xi,t) \models \varphi  \impliedby (R,t) \models \phi.
\label{eq:thm1implication}
\end{equation}
\end{theorem}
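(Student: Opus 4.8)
The plan is to proceed by structural induction on the STL formula $\varphi$, exploiting the compositional definition of $\Upsilon$ and the single structural property of the reachable set that is actually needed: for every $\xi \in \mathcal{S}(\Sigma)$ and every $t \in \R_{\geq 0}$ we have $\xi(t) \in R(t)$. Since the $\mathcal{A}$ (all) operator does not commute with negation — $(R,t)\models\neg\mathcal{A}\psi$ only tells us that \emph{some} $x \in R(t)$ violates $\psi$, and that $x$ need not lie on the particular trajectory $\xi$ — I would first rewrite $\varphi$ into negation normal form, pushing all negations down to the atomic predicates using the STL equivalences already listed after \eqref{eq:stlgrammar} (so that $\vee$, $\event_{[a,b]}$ and $\always_{[a,b]}$ appear explicitly and negation occurs only as $\neg(h(s)\geq 0)$, which is absorbed into a predicate over $x$). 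This rewriting is exactly the role of \cite[Lemma 2]{Roehm2016} and guarantees that $\mathcal{A}$ is only ever applied to negation-free propositional formulae.

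For the base and Boolean cases the argument is short. For $\varphi = \true$ the claim is vacuous. For an atomic predicate $\varphi = (h(s)\geq 0)$ the transformation yields an RTL formula of the form $\mathcal{A}(h(x)\geq 0)$ evaluated at the appropriate time; if $(R,t)\models\mathcal{A}(h(x)\geq 0)$ then $h(x)\geq 0$ holds for every $x \in R(t)$, and since $\xi(t)\in R(t)$ we obtain $h(\xi(t))\geq 0$, i.e. $(\xi,t)\models\varphi$. Here the fact that $R$ is an over-approximation ($R_e(t)\subseteq R(t)$) is precisely the feature that makes the implication go in the sound direction. For conjunction $\Upsilon(\varphi_1\wedge\varphi_2)=\Upsilon(\varphi_1)\wedge\Upsilon(\varphi_2)$ and for disjunction $\Upsilon(\varphi_1\vee\varphi_2)=\Upsilon(\varphi_1)\vee\Upsilon(\varphi_2)$, so \eqref{eq:thm1implication} follows immediately from the induction hypotheses on $\varphi_1$ and $\varphi_2$; note that the disjunctive case is sound precisely because the implication runs from the set level to the trajectory, so a set-level disjunct witnessing $(R,t)\models\phi_i$ forces the matching STL disjunct to hold along $\xi$.

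The substantive work lies in the temporal operators. After putting $\varphi$ in negation normal form the only temporal constructs are $\event_{[a,b]}$, $\always_{[a,b]}$ and $\next_a$, and the $c$-divisibility hypothesis together with $t_{i+1}-t_i = c$ ensures that each bound $a,b$ coincides with a grid point $t_i$. I would then invoke the discretization of \cite[Lemma 4]{Roehm2016}: a continuous-time quantification over $[a,b]$ is rewritten as a finite Boolean combination of $\next$ applied to $\mathcal{A}\psi$, where the combination ranges over the sampling instants $t_i\in[a,b]$ and the open intervals $(t_i,t_{i+1})\subseteq[a,b]$. The key observation is that the piecewise description \eqref{eq:reachableset} of $R$ supplies a set $\Rs_{\{t_i\}}$ for each instant and a set $\Rs_{(t_i,t_{i+1})}$ covering the entire open interval, so that when $\varphi$ is $c$-divisible the interval $[a,b]$ is an \emph{exact} union of such grid points and open subintervals. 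The finite RTL conjunction/disjunction therefore covers the whole continuum $[a,b]$ with no gaps, and combining this with the atomic containment property transports set satisfaction to trajectory satisfaction piece by piece, closing the induction.

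The main obstacle I anticipate is exactly this temporal step: showing that the finite RTL combination over the sampling grid is a sound surrogate for the continuous STL quantification, i.e. that no time instant in $[a,b]$ is missed and that $\mathcal{A}$ evaluated on the over-approximating interval reach-set $\Rs_{(t_i,t_{i+1})}$ is strong enough to certify $\event$ / $\always$ for every individual trajectory. This is where $c$-divisibility is indispensable: without it a bound $a$ or $b$ could fall strictly inside some open interval $(t_i,t_{i+1})$, and the single reach-set $\Rs_{(t_i,t_{i+1})}$ could no longer be split at $a$ or $b$, breaking the exact cover. This content is encapsulated by \cite[Lemma 4]{Roehm2016}, and the remainder of the proof is then a routine assembly of that lemma together with \cite[Lemma 2]{Roehm2016} through the induction above.
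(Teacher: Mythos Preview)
The paper does not actually prove this theorem: Theorem~\ref{thm:stl2rtl} is quoted verbatim from \cite[Theorem~1]{Roehm2016}, and the paper only supplies an informal description of how $\Upsilon$ works (STL $\to$ negation normal form $\to$ sampled-time STL $\to$ RTL, via \cite[Lemma~2 \& Lemma~4]{Roehm2016}) before referring the reader to \cite{Roehm2016} for the full definition and proof. So there is no in-paper proof to compare against.

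That said, your sketch is consistent with the paper's informal account and with the structure of the cited result: you correctly identify that negations must be pushed to the leaves first (because $\mathcal{A}$ does not commute with $\neg$), that the containment $\xi(t)\in R(t)$ is the sole set-theoretic fact driving the atomic case, and that $c$-divisibility together with the alternating point/open-interval structure of \eqref{eq:reachableset} is what makes the temporal discretization sound. One small refinement: the paper's description makes explicit that $\Upsilon$ factors through an intermediate \emph{sampled-time STL} stage (sound but not complete) before the final lift to RTL (sound and complete), rather than a single direct induction on STL syntax; your induction effectively collapses these two steps, which is fine for a sketch but would need to be separated if you were to reconstruct the proof in full.
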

\rev{
The reachable set in \eqref{eq:reachableset} is formed by the reachable sequence $\mathcal{R}$, which partitions time into an alternating sequence of points and open intervals, whereas STL reasons over an infinite (but bounded) set of time instances. The transformation $\Upsilon$ first transforms the STL formula (rewritten in negation normal form) into sampled-time STL \cite{Roehm2016}: a subclass of STL that restricts to formulas with operators only reasoning over intervals $(0,c)$ and time shifts of fixed length $c$, such that the STL formula only reasons over an alternating sequence of points and open intervals. Here, the value $c/2$ can be seen as the time step between the points and a time interval. This transformation is sound, but not complete, i.e. for an STL formula $\varphi$ and transformed sampled-time STL formula $\varphi'$, we have $(\xi,t) \models \varphi) \impliedby (\xi,t) \models \varphi'$, but the converse is not necessarily true. Subsequently, sampled-time STL (in conjunctive normal form) is transformed into RTL. In this transformation, the reasoning over trajectories is replaced with reasoning over a reachable set. The transformation between sampled-time STL to RTL is sound and complete. The transformation from STL to sampled-time STL results in general in an over-approximation, which can be reduced by taking smaller values of $c$. Since the full definition of the transformation $\Upsilon$ is quite involved, we refer the interested reader to \cite{Roehm2016}.
}

The transformation $\Upsilon$ yields RTL formulae of the form
\begin{equation}
\phi = \bigwedge_{i\in \itset{I}} \bigvee_{j \in \itset{J}_i} \next_{j \frac{c}{2}} \bigvee_{k \in \itset{K}_{ij}} \mathcal{A} \psi_{ijk},
\label{eq:RTLstandard}
\end{equation}
where $\itset{I}, \itset{J}_{i}, \itset{K}_{ij}$ are finite index sets and $\psi_{ijk}$ are non-temporal subformulae. As can be seen, $j$ relates to a time step $c/2$, whereas $i$ and $k$ relate to the number of conjunctions and disjunctions. As example, the transformation of $\varphi = \varphi_x \mathcal{U}_{[c,2c]} \varphi_v$ with  $\varphi_x = x \geq 0$, $\varphi_v = v \geq 0$ is given by
\begin{align*}
\psi =& \next_0 \mathcal{A} \psi_x
    \wedge \next_{\frac{c}{2}} \mathcal{A} \psi_x 
     \\
    & \wedge \next_{c}\mathcal{A}( \psi_x \vee \psi_v)
    \wedge 
    (\next_{c} \mathcal{A} \psi_v \vee \next_{\frac{3c}{2}} \mathcal{A} \psi_x) \\
    &\wedge (\next_{c} \mathcal{A} \psi_v \vee \next_{\frac{3c}{2}} \mathcal{A} \psi_v \vee \next_{2c} \mathcal{A} (\psi_x \vee \psi_v)), 
\end{align*}%
with $\psi_x = x \geq 0$, $\psi_v = v \geq 0$. 
\subsection{Genetic programming}
 The controllers in this work are synthesized using genetic programming (GP) \cite{Koza1992}, a variant of genetic algorithms (GA) \cite{Holland1975}, which evolves entire programs rather than optimizing parameters. In our case, the evolved program is a controller based on elementary building blocks consisting of state variables and basic functions such as addition and multiplication. Within genetic programming, a candidate solution, called an \textit{individual}, is represented by a data structure enabling easy manipulation, such as an expression tree. This data structure is called the \textit{genotype}, whereas the individual itself, e.g., an analytic function, is referred to as the \textit{phenotype}. A pool of individuals, called the \textit{population}, is evolved based on a cost function, called the \textit{fitness} function, which assigns a fitness score to all individuals. Depending on the fitness score, individuals can be selected to be recombined or modified using \textit{genetic operators}, such as crossover and mutation. In the former, two subtrees of individuals are interchanged, whereas in the latter, a random subtree is replaced by a new random subtree. Each genetic operator has a user-defined rate, which determines the probability of the operator being applied to the selected individuals. A number of individuals are selected based on tournament selection: a fixed number of individuals are randomly selected from the population, and the individual with the highest fitness is returned. The process of selection and modification through genetic operators is repeated until a new population is created. The underlying hypothesis is that the average fitness of the population increases over many of these cycles, which are referred to as \textit{generations}. The algorithm is terminated after a satisfying solution is found or a maximum number of generations is met. 

\begin{figure}[t]%
\centering
\subfloat[Grammar]{\includegraphics[width = 0.4 \textwidth]{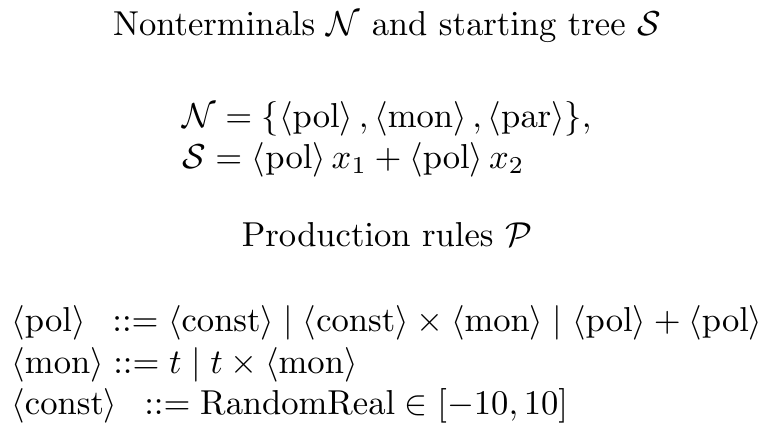} \label{fig:gram}} \hspace{1cm}
\subfloat[Genotype]{\includegraphics[width = 0.3\textwidth ]{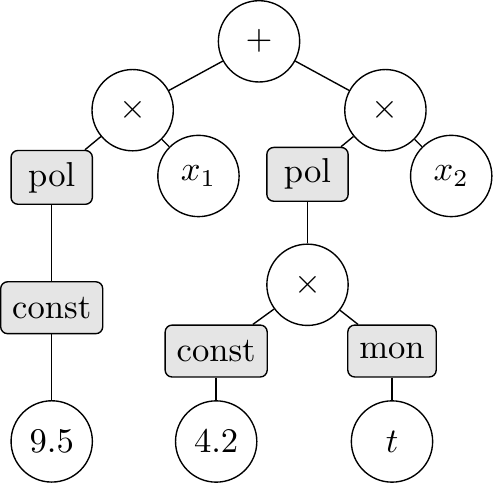} \label{fig:geno}} 
\caption{Example of a grammar and a genotype adhering to it. The corresponding phenotype is given by $9.5 x_1 +4.2t x_2$.}
\label{fig:grammar_examp}
\end{figure}

We use the variant grammar-guided genetic programming (GGGP) \cite{Whigham1995,Verdier2018}, which utilizes a grammar to which all individuals adhere: the population is initialized by creating random individuals adhering to the grammar and the used genetic operators are defined such that the resulting individuals also adhere to the grammar. The grammar is defined by the tuple $(\mathcal{N},\mathcal{S}, \mathcal{P})$, where $\mathcal{N}$ is a set of nonterminals, $\mathcal{S}$ a starting tree, and $\mathcal{P}$ a set of production rules, which relate nonterminals to possible expressions. An example of a grammar is shown in Figure \ref{fig:gram}. In this grammar, the nonterminals correspond to polynomials $\rmm{pol}$, monomials $\rmm{mon}$ over time $t$, and constants $\rmm{const}$. The starting tree $\mathcal{S}$ restricts the class of controllers to time-varying state feedback laws, linear in the state $x \in \R^2$.
Given the grammar in Figure \ref{fig:gram}, an example of a genotype is shown in Figure \ref{fig:geno}, which has the corresponding phenotype of $9.5 x_1 + 4.2 t x_2$. 

\section{Problem definition and solution approach}
\label{sec:PdSa}
We consider nonlinear systems subject to disturbances of the form:
\rev{
\begin{equation}
\Sigma_\mathrm{ol} = \left\{ \begin{array}{rl}
\dot{\xi}(t) = & f(t,\xi(t),u(t),\omega(t)), \\
\xi(0) \in &I, ~\omega(t) \in \Omega,
\end{array} \right.
\label{eq:syst}
\end{equation}
with states $\xi(t) \in \R^n$, inputs $u(t)\in \R^m$, bounded disturbances $\omega(t) \in \Omega \subset \R^l$, $I \subset \R^n$ is the set of initial conditions, $I$ and $\Omega$ are compact, and $f:\R_{\geq 0} \times \R^n \times \R^m \times \R^l \rightarrow \R^n$, $u:\R \rightarrow \R^m$, and $\omega:\R \rightarrow \R^l$ are assumed to be Lipschitz continuous.} 
We consider sampled-data time-varying state-feedback controllers $\kappa: \R_{\geq 0} \times \R^n \rightarrow \R^m$ such that 
\begin{equation}
    u(t) = \kappa(t_k,\xi(t_k)) \text{ for all } t \in [t_k, t_k+ \eta),
    \label{eq:SDcontroller}
\end{equation}
where $t_k$ denotes the $k$-th sampling instant, $
t_0 = 0$, and $\eta$ is the sampling time. This results in a closed-loop system of the form \eqref{eq:syscl} with, $\forall t\in [t_k, t_k+\eta)$,\rev{
\begin{equation}
f_\mathrm{cl}(t,\xi(t),\omega(t)) =f(t,\xi(t),\kappa(t_k,\xi(t_k)),\omega(t)).
\end{equation}}
The goal of this paper is formalized in the following:
\begin{prob}
\label{prob:1}
\rev{Given a $c$-divisible STL formula $\varphi$, the open-loop system \eqref{eq:syst}, and a sampling time $\eta$}, synthesize a closed-form sampled-data time-varying controller $\kappa: \R_{\geq 0} \times \R^n \rightarrow \R^m$ such that for all initial conditions and disturbances the resulting trajectories $\xi$ of the closed-loop system satisfy $\varphi$, i.e.:
\begin{equation}
\forall \xi \in \mathcal{S}(\Sigma): (\xi,0) \models \varphi
\label{eq:spec}
\end{equation}
\end{prob}

\begin{figure}[t]
\centering
\includegraphics[width = 0.7\linewidth]{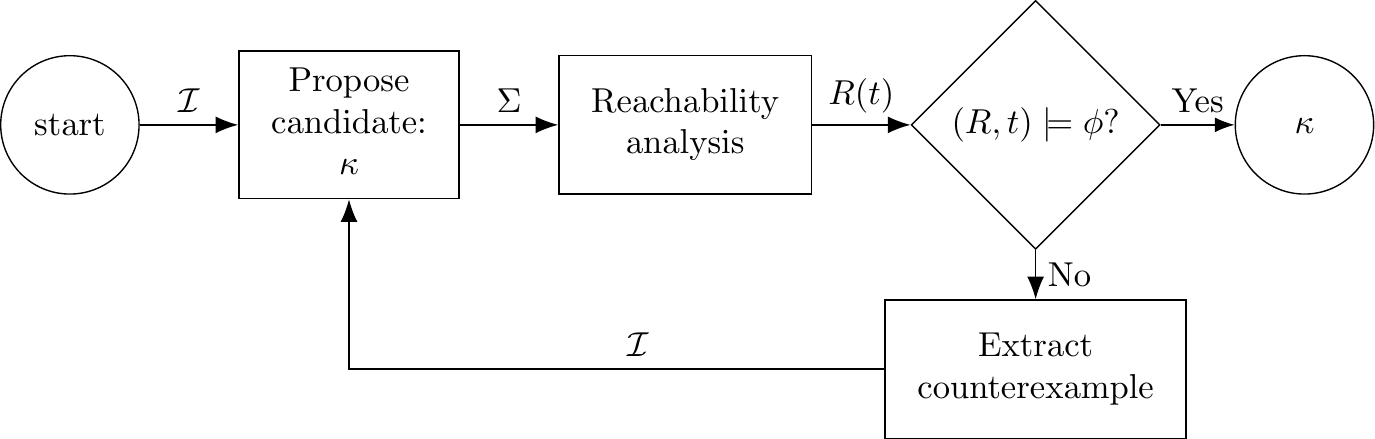}
\caption{Schematic overview of the algorithm. 
}
\label{fig:diagram}
\end{figure}
In this work, we propose a counterexample-guided inductive synthesis (CEGIS) framework to synthesize a controller such that $(R,0) \models \Upsilon(\varphi)$, thereby solving solving Problem \ref{prob:1} as follows from Theorem \ref{thm:stl2rtl}. The framework consists of iteratively proposing a controller obtained through GGGP\footnote{While GGGP evolves a population of controllers, only the controller with the highest fitness is returned.} and then formally verifying the RTL formula $\Upsilon(\varphi)$ using reachability analysis. The proposed controller by GGGP is optimized w.r.t. a set of simulated trajectories \rev{(obtained through numerical integration)}, with the underlying idea that these are relatively fast to compute and provide a sensible search direction for the synthesis. The \rev{computationally more intensive} reachability analysis verifies the resulting controller.

For a given open-loop system $\Sigma_\mathrm{ol}$, STL formula $\varphi$, and grammar $(\mathcal{N},\mathcal{S},\mathcal{P})$, the algorithm is initialized as follows:
\begin{enumerate}
\item[I1)] The RTL formula $\phi$ is computed using $\phi = \Upsilon(\varphi)$ (see Theorem \ref{thm:stl2rtl}).
\item[I2)] The set $\mathcal{I}$, consisting of pairs of initial conditions and disturbance realizations is initialized by randomly choosing $n_\mathrm{s}$ initial conditions $\{x^1, \dots, x^{n_\mathrm{s}}\} \subset I$, with random disturbance realizations $\omega^i :\R_{\geq 0} \rightarrow \Omega$, such that $\mathcal{I} = \left\{\left(x^1, \omega^1 \right), \dots, \left(x^{n_\mathrm{s}},\omega^{n_\mathrm{s}} \right)   \right\}$.
\end{enumerate}
Given the initialized data, the algorithm goes through the following cycle, illustrated in Figure \ref{fig:diagram}, where each cycle is referred to as a \textit{refinement}:
\begin{enumerate}
\item[A1)] A candidate solution is proposed using GGGP, based on simulation trajectories corresponding to the set $\mathcal{I}$. 
\item[A2)]  For the given candidate controller, the reachable set is computed.
\item[A3)]  Based on the reachable set, either:
\begin{enumerate}
\item $(R,t) \models \phi$, \rev{which is formally verified through SMT solvers}, thus a controller solving Problem \ref{prob:1} is found.
\item $(R,t) \not\models \phi$, and a counterexample is extracted in the form of an initial condition $x$ and a corresponding disturbance realization $w$. This pair $(x,w)$ is added to $\mathcal{I}$ and the algorithm returns to step A1).
\item $(R,t) \not\models \phi$ and a maximum of refinements is reached, therefore the algorithm is terminated.
\end{enumerate}
\end{enumerate}
To quantify the violation or satisfaction of an RTL formula, we introduce quantitative semantics for RTL in the next section. The proposal of a candidate controller in step A1) is discussed in Section \ref{sec:proposal}. The verification and counterexample generation in step A3) is discussed in Section \ref{sec:verification}.

\section{Quantitative semantics}
\label{sec:RTLrobust}
Inspired by the quantitative semantics of STL \cite{Donze2010,Fainekos2009}, we define quantitative semantics for RTL in this section. These quantitative semantics provide a \textit{robustness measure} on how well the formula is satisfied. For an RTL formula $\phi$ with propositional subformulae $\psi$, the quantitative semantics is given by functions $P(R,\phi,t)$ and $\varrho(x,\psi)$, respectively, recursively defined as:
\begin{align*}
\varrho(x, \true) = &+\infty,\\
\varrho(x, h(x) \geq 0) = & h(x), \\
\varrho(x, \neg \psi ) = & -\varrho(x,\psi), \\
\varrho(x,  \psi_1 \wedge \psi_2 ) =& \min(\varrho(s, \psi_1), \varrho(s, \psi_2)),\\
P(R, \mathcal{A} \psi ,t) = & \inf_{x \in R(t)} \varrho(x,\psi), \\
P(R,  \phi_1 \vee \phi_2 ,t) = &\max(P(R, \phi_1 ,t), P(R, \phi_2,t)),\\
P(R,  \phi_1 \wedge \phi_2 ,t) =& \min(P(R, \phi_1 ,t), P(R, \phi_2,t)),\\
P(R, \next_{a} \phi ,t) = & P(R, \phi ,t+a).
\end{align*}
The quantitative semantics of STL are sound and complete \cite{Fainekos2009,Donze2013}. The quantitative semantics of RTL also have these properties:
\begin{theorem}[Soundness and completeness]
\label{thm:soundness}
Let $\phi$ be an RTL formula, $R$ a reachable set, and $t$ a time instance, then:
\begin{enumerate}
\item[1)] $P(R, \phi,t) > 0 \Rightarrow (R,t) \models \phi$ and $ (R,t) \models  \phi \Rightarrow  P(R, \phi,t)  \geq  0,$
\item[2)] $P(R, \phi,t) < 0 \Rightarrow (R,t) \not\models \phi$ and $(R,t) \not\models \phi  \Rightarrow  P(R, \phi,t)  \leq 0.$
\end{enumerate}
\end{theorem}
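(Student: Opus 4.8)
The plan is to prove the statement by structural induction, organised into the two layers that mirror the two-level grammar of RTL: first the propositional layer, where formulae $\psi$ are evaluated pointwise by $\varrho(x,\psi)$ against the satisfaction relation $x \models \psi$, and then the reachset layer, where formulae $\phi$ are evaluated by $P(R,\phi,t)$ against $(R,t)\models\phi$.

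First I would establish an auxiliary lemma at the propositional level, asserting exactly the analogue of the theorem: for every $\psi$ and every $x$, $\varrho(x,\psi) > 0 \Rightarrow x \models \psi$, $\; x \models \psi \Rightarrow \varrho(x,\psi)\geq 0$, $\; \varrho(x,\psi) < 0 \Rightarrow x \not\models\psi$, and $x\not\models\psi \Rightarrow \varrho(x,\psi)\leq 0$. This is proved by induction on the structure of $\psi$. The base cases are $\true$ (where $\varrho = +\infty$ and satisfaction is unconditional) and $h(x)\geq 0$ (where $\varrho(x,\psi) = h(x)$, so the four implications are immediate from the sign of $h(x)$, with the asymmetry between strict and non-strict inequalities arising precisely at $h(x)=0$). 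For $\neg\psi$ the two pairs of implications swap roles under the sign change $\varrho(x,\neg\psi) = -\varrho(x,\psi)$, so the positive implications for $\neg\psi$ invoke the negative ones of the induction hypothesis and vice versa; for $\psi_1\wedge\psi_2$ the $\min$ in the quantitative semantics matches the conjunction in the Boolean semantics, so each implication carries over termwise.

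Next I would prove the theorem itself by induction on the structure of $\phi$. The crucial base case is $\phi = \mathcal{A}\psi$, where $P(R,\mathcal{A}\psi,t) = \inf_{x\in R(t)}\varrho(x,\psi)$ must be linked to $\forall x \in R(t): x\models\psi$. If the infimum is strictly positive then every $\varrho(x,\psi)$ bounds it from above and is hence positive, so the propositional lemma gives $x\models\psi$ for all $x$; conversely, universal satisfaction gives $\varrho(x,\psi)\geq 0$ pointwise, whence the infimum is $\geq 0$, and universal non-satisfaction exhibits a witness $x$ with $\varrho(x,\psi)\leq 0$, so the infimum is $\leq 0$. The inductive cases $\phi_1\vee\phi_2$, $\phi_1\wedge\phi_2$, and $\next_a\phi$ are then routine: the $\max$/$\min$/time-shift in the definition of $P$ match the or/and/time-shift in the Boolean semantics exactly, so each of the four implications lifts from the subformulae through the induction hypothesis.

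The main obstacle is the treatment of the infimum in the $\mathcal{A}\psi$ case, specifically the implication $P(R,\mathcal{A}\psi,t) < 0 \Rightarrow (R,t)\not\models\mathcal{A}\psi$. A strictly negative infimum need not be attained, so I cannot immediately name a violating point; instead I would argue that, since $\inf_{x\in R(t)}\varrho(x,\psi) < 0$, the definition of the infimum furnishes a point $x\in R(t)$ with $\varrho(x,\psi) < \tfrac12\inf_{x'\in R(t)}\varrho(x',\psi) < 0$, and the propositional lemma then yields $x\not\models\psi$. This care over strict versus non-strict bounds is exactly what forces the statement to take the two-sided $>/\geq$ and $</\leq$ form rather than a single equivalence, and it is the only place where analytic, rather than purely syntactic, content enters the argument.
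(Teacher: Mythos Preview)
Your proposal is correct and follows essentially the same structural induction as the paper. The paper streamlines slightly by observing that statement 2) is the contrapositive of statement 1) and proving only the latter, which sidesteps your infimum ``obstacle'' in the $\mathcal{A}\psi$ case entirely: the implication $P(R,\mathcal{A}\psi,t) < 0 \Rightarrow (R,t)\not\models\mathcal{A}\psi$ is just the contrapositive of $(R,t)\models\mathcal{A}\psi \Rightarrow P(R,\mathcal{A}\psi,t) \geq 0$, which you already prove directly without any witness extraction.
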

\begin{rem}
Note that $P(R,\phi,t) = 0$ does not imply $(R,t) \models \phi$ nor $(R,t) \not\models \phi$. This is because on the boundary of an inequality, the distinction between inclusion or exclusion is lost within the quantitative semantics. That is, if  $\varrho(x,\psi) = 0$, we also have $\varrho(x,\neg \psi) =0$, hence the quantitative semantics of two mutually exclusive logic formulae evaluate to the same value.
\end{rem}

The proof of Theorem \ref{thm:soundness} can be found in Appendix \ref{app:proofs}. Consider a $c$-divisible STL formula $\varphi$ and the corresponding RTL formula $\phi = \Upsilon(\varphi)$ in the form of \eqref{eq:RTLstandard}. Using the equivalences $\next_a (\phi_1 \wedge \phi_2) = \next_a \phi_1 \wedge \next_a \phi_2$ and rewriting $\psi_{ijk}$ in disjunctive normal form, we can express the RTL formula \eqref{eq:RTLstandard} as:
\begin{subequations}
\begin{align}
\phi &= \bigwedge_{i \in \itset{I}} \bigvee_{j \in \itset{J}_i,k \in \itset{K}_{ij}} \phi' _{ijk}, \label{eq:maincheck} \\
\phi'_{ijk} &=  \next_{j \frac{c}{2}} \mathcal{A}   \bigvee_{a \in \itset{A}^{ijk}} \bigwedge_{b \in \itset{B}^{ijk}_a} h^{ijk}_{ab}(x) \sim 0, \label{eq:subformcheck}
\end{align}%
\label{eq:RTLcheck}%
\end{subequations}%
where $\itset{A}^{ijk}$ and $\itset{B}^{ijk}_a$ denote finite index sets, $\sim \in \{ \geq, >\}$, and $h^{ijk}_{ab}(x) \sim 0$ is a predicate over $x$. Using the quantitative semantics defined in Section \ref{sec:RTLrobust}, the robustness measure of this RTL formula is given by
\begin{subequations}
\begin{align}
P(R,\phi,0) &= \min_{i\in \itset{I}} \left( \max_{j \in \itset{J}_i,k \in \itset{K}_{ij}} P(R,\phi'_{ijk},0) \right), \label{eq:robustfull}\\
P(R,\phi'_{ijk},0)& = \! \! \!\! \inf_{x \in R  \left(j\frac{c}{2}\right)}\!\! \left( \max_{a\in \itset{A}^{ijk}} \left( \min_{b \in \itset{B}^{ijk}_a}  h^{ijk}_{ab}(x) \right) \! \!  \right) \! .\label{eq:subformRD}
\end{align}%
\label{eq:RTLrobust}%
\end{subequations}%

\section{Candidate controller synthesis}
\label{sec:proposal}
In this section, we detail step A1) of the proposed algorithm in Section \ref{sec:PdSa}, i.e., 
the proposal of a candidate controller. The candidate controller is synthesized using GGGP, by maximizing an approximation of the robustness measure, which is based on a finite number of simulated trajectories, \rev{obtained through numerical integration}.
The sampling time is equal to $c/2$ to coincide with the time instances at which the robustness measure $P(R,\phi,0)$ is evaluated. 
For an RTL formula of the form \eqref{eq:RTLstandard}, the first and the final time instances of relevance $\tau_0$ and $\tau_\mathrm{f}$, are given by $\tau_0 = 0$ and $\tau_\mathrm{f} = \frac{c}{2} \max_{i \in \itset{I}} | \itset{J}_i|$, respectively. \rev{Let us denote the finite set of sampled-time instances $\hat{T} = \{\tau_0, \dots, \tau_\mathrm{f}\}$.}
Given a candidate controller $\kappa: \R_{\geq 0} \times \R^n \rightarrow \R^m$, a set of pairs of initial conditions and disturbance realizations $\mathcal{I}$, we consider an approximated reachable set \rev{$\hat{R}_\mathcal{I}^\kappa:\hat{T} \rightarrow 2^{\R^n}$} formed by all corresponding simulated trajectories \rev{$x: \hat{T} \rightarrow \R^n$}, \rev{such that for a given time instance $\tau_q \in \hat{T}$:}
\begin{equation*}
\begin{array}{r}
\hat{R}_\mathcal{I}^\kappa(\tau_q) \! = \!
\{ x(\tau_q) \mid (x(\tau_0), \omega) \in \mathcal{I}\}.
\end{array}
\end{equation*}
Provided this set $\hat{R}_\mathcal{I}^\kappa(\tau_q)$, we approximate the robustness measure by $P(\hat{R}_\mathcal{I}^\kappa, \phi, 0)$. 

\subsection{Outline of the candidate controller synthesis}
\label{sec:proposaloutline}
\rev{
The proposal of a candidate controller in step A1) is based on approximating an optimal controller that solves:
\begin{equation}
    \sup\limits_{\kappa} \inf\limits_\mathcal{I}~ P(\hat{R}_\mathcal{I}^\kappa,\phi,0).
\end{equation}
If the optimum is positive, it follows from Theorem \ref{thm:soundness} and Theorem \ref{thm:stl2rtl} that the corresponding optimal controller $\kappa^*$ solves Problem \ref{prob:1}. To (approximately) solve this optimization problem, the algorithm alternatively updates the controller and the disturbances within $\mathcal{I}$, as described in the following steps, which are also illustrated in Figure \ref{fig:diagram2}:
}
\begin{Description}[margin =A1.a)]
\item[A1.a)] \rev{Given the set $\mathcal{I}$}, We synthesize an analytic expression $\kappa:\R\times \R^n \rightarrow \R^m$ by using GGGP to solve:
\begin{equation}%
 \argsup\limits_{\kappa}~ P(\hat{R}_\mathcal{I}^\kappa,\phi,0).
 \label{eq:GGGPcontroller}%
\end{equation}%
If for the resulting controller $\kappa$ the robustness measure approximation $P(\hat{R}_\mathcal{I}^\kappa, \phi,0)$ is negative, this optimization step in \eqref{eq:GGGPcontroller} is repeated. Otherwise, the algorithm continues to the next step.
\item[A1.b)] \rev{Given the controller $\kappa$}, for each initial condition $x^{i}$ in $\mathcal{I}$, an analytic expression for a disturbance realization $\omega^i:\R\rightarrow \Omega$ is synthesized using GGGP, in which the robustness measure approximation is minimized, i.e.:
\rev{%
\begin{align}
\begin{array}{ll}
\omega^* = \argsup\limits_{\omega^i}& -P\left(\hat{R}_{\{(x^{i},\omega^i)\}}^\kappa,\phi,0\right).
\end{array}
\label{eq:distopt}
\end{align}}%
\rev{
The set $\mathcal{I}$ is then updated by replacing $(x^i,\omega^i)$ with $(x^i,\omega^*)$, i.e.
\begin{equation*}
    \mathcal{I} \leftarrow (\mathcal{I}\backslash \{(x^i,\omega^i)\}) \cup \{(x^i, \omega^*)\}.
\end{equation*}}%
If the corresponding robustness degree approximation $P(\hat{R}_{\{(x^{i},\omega^i)\}}^\kappa, \phi,0)$ is negative, the algorithm returns to step A1.a). Otherwise, if for all updated disturbance realizations the robustness measure approximation is positive, i.e., $\forall i,~ P(\hat{R}_{\{(x^i, \omega^i)\}}^\kappa, \phi,0) > 0$, the algorithm returns a candidate controller.
\end{Description}

\begin{figure}[t]
\centering
\includegraphics[width = 0.7\linewidth]{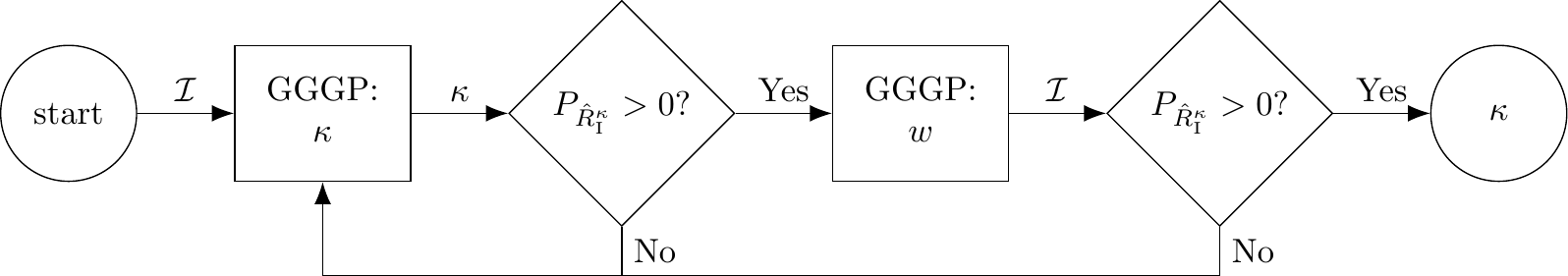}
\caption{Schematic overview of the synthesis of candidate controller. 
}
\label{fig:diagram2}
\end{figure}

\subsection{Reference-tracking controllers}
\label{sec:reftrack}
To speed up the synthesis, \rev{it is possible to impose a structure to the solution. In this section we discuss the design of reference-tracking controllers,} based on a nominal reference trajectory $x_\mathrm{ref}(t)$ and a corresponding feedforward input $u_\mathrm{ff}(t)$. That is, we consider a time-varying reference-tracking controller of the form:
\rev{
\begin{equation}
\kappa(t,x)  = u_\mathrm{ff}(t) + \kappa_\mathrm{fb}(t,x-x_\mathrm{ref}(t)),
\label{eq:con_struct}
\end{equation}}%
where $\kappa_\mathrm{fb}: \R_{\geq 0} \times \R^n \rightarrow \R^m$ is a time-varying feedback controller. \rev{This controller is then used in a sampled-data fashion as in \eqref{eq:SDcontroller}}. The feedforward input and reference trajectory can be computed beforehand as follows:
\begin{enumerate}
\item[R1)] Given a point $x_0 \in int(I)$, (e.g. the centroid of $I$ if $I$ is convex), an analytic expression for $u_\mathrm{ff}: \R \rightarrow \R^m$ is synthesized using GGGP, by maximizing the approximated robustness measure for a nominal trajectory starting at $x_0$, i.e. a trajectory with no disturbance:
\begin{align*}
\begin{array}{ll}
\argsup \limits_{u_\mathrm{ff}} & P(\hat{R}_{\{ (x_0, \textbf{0}_l) \}}^{u_\mathrm{ff}}, \phi, 0).
\end{array}
\end{align*}
  \item[R2)] Given the feedforward input $u_\mathrm{ff}$, an analytic expression for the corresponding \rev{nominal} reference trajectory $x_\mathrm{ref}: \R \rightarrow \R^n$ is synthesized using GGGP. \rev{Given the simulated solution $x(\tau_k)$ corresponding to $x(0) = x_0$, $\omega(t) = \textbf{0}_l$, and $u(t) = u_\mathrm{ff}(t)$, $x_\mathrm{ref}$ is obtained by fitting for each state dimension $i \in \{1, \dots, n\}$ an expression to $x_i(\tau_k)$}, based on the Euclidean norm of the error vector $e_i = [e_i(\tau_0), \dots, e_i(\tau_\mathrm{f})]$, with $e_i(\tau_k) = x_i(\tau_k) -x_{\mathrm{ref},i}(\tau_k)$, i.e., maximizing:
\begin{equation*}
 \argsup\limits_{x_{\mathrm{ref},i}} (1-\|e_i\|)^{-1}.
\end{equation*}%
\end{enumerate}%
Using the synthesized pair $(u_\mathrm{ff}(t), x_\mathrm{ref}(t))$, the user-defined grammar within GGGP can be used to enforce the structure of a time-varying reference controller in \eqref{eq:con_struct} within step A1), as demonstrated by the following brief example:
\begin{exmp}
Let us consider a one-dimensional system with dimensions $n = m =1$. The structure of \eqref{eq:con_struct}, where we further restrict $\kappa$ to be linear in state, can be enforced by taking the starting tree $\mathcal{S} = u_\mathrm{ff}(t) + \rmm{pol}(x-x_\mathrm{ref})$ and the production rules from Figure \ref{fig:gram}.
\end{exmp}

\section{Reachability analysis and verification}
\label{sec:verification}
In this section we detail step A3) of the algorithm. \rev{In this work, we consider polynomial zonotopes $\mathcal{PZ}$ as the set representation of the reachable set\footnote{In this definition, without loss of generality and for the ease of exposition, we only consider (dependent) generators $G$ and we omit independent generators; for the full definition we refer to \cite{Kochdumper2019a}.}:
\begin{defn}[Polynomial zonotope]
\label{def:PZ}
Given a generator matrix $G \in \R^{n \times h}$ and exponent matrix $E \in \mathbb{Z}_{\geq 0}^{p \times h}$, a polynomial zonotope $\mathcal{PZ}$ is defined as
\begin{equation*}
\begin{array}{r}
\mathcal{PZ}  := \left\{ \sum_{i=1}^h \left( \Pi_{k=1}^p \alpha_k^{E_{(k,i)}} \right) G_{(\cdot,i)}  ~\middle|~ \alpha_k \in [-1,1] \right\}.
\end{array}
\end{equation*}
The vector $\bm{\alpha} = [\alpha_1,\dots,\alpha_p]^T$ is referred to as the \textit{parameterization vector} of the polynomial zonotope.
\end{defn}
Consider a parameterization vector $\bm{\alpha}$ and a reachable set $R(t)$ expressed as a polynomial zonotope. The corresponding point in the reachable set $z(\bm{\alpha},R(t)) \in \R^n$ is given by:
\begin{equation}
z(\bm{\alpha},R(t)) =  \sum_{i=1}^h \left( \Pi_{k=1}^p \alpha_k^{E_{R(k,i)}} \right) G_{R(\cdot,i)},
\end{equation}%
where $E_R$ and $G_R$ denote the exponent matrix and generator matrix of $R(t)$, respectively.
The benefit of polynomial zonotopes as set representation is that dependencies between points in subsequent reachable sets are maintained under the reachability analysis operations \cite{Kochdumper2019b}. That is, for a reachable set $R: \R_{\geq 0} \rightarrow 2^{\R^n}$ and parameterization vector $\bm{\alpha}$, we have $\xi(t) = z(\bm{\alpha},R(t)) \implies \xi(0) = z(\bm{\alpha},R(0))$}
This enables the extraction of an initial condition corresponding to a point for which the specification is violated. Using this method, we construct a counterexample in the form of a pair of initial condition and disturbance realization $(x_0,\omega)$, such that the corresponding trajectory results in a violation of the RTL formula. After reachability analysis, the algorithm undergoes the following steps:
\begin{enumerate}
\item[B1)] For all subformulae $\phi'_{ijk}$ in \eqref{eq:subformcheck}, the corresponding robustness sub-score \eqref{eq:subformRD} is computed by solving the following nonlinear optimization problem\footnote{
To use gradient-based optimization, $\max$ and $\min$ can be approximated by
$M^\beta_{a \in \itset{A}} (x_a) = (\sum_{a\in \itset{A}} x_a e^{\beta x_a })/( \sum_{a\in \itset{A}} e^{\beta x_a }),$
where $\itset{A}$ denotes an iterator set and for $\beta \rightarrow \infty$, $M^\beta_{a \in \itset{A}} (x_a) \rightarrow \max_{a \in \itset{A}} x_a$ and $\beta \rightarrow -\infty$, $M^\beta_{a \in \itset{A}} (x_a) \rightarrow \min_{a \in \itset{A}} x_a$.}over the corresponding set $R(jc/2)$:
\end{enumerate}
\rev{%
\begin{equation}
\begin{array}{c}
p_{ijk}^* \! = \!  \inf \limits_{\bm{\alpha}_{ijk}}  \!\left(  \max \limits_{a\in \itset{A}^{ijk}}  \!\left(  \min \limits_{b \in \itset{B}_a^{ijk}}   h^{ijk}_{ab}(z(\bm{\alpha}_{ijk},R\left(\frac{jc}{2}\right)) \! \right)\!\!\! \right) \!.\\
\end{array}
\label{eq:optRD}
\end{equation}}
\begin{enumerate}
\item[B2)] Given the robustness sub-scores $p_{ijk}^*$, compute the full robustness measure \eqref{eq:robustfull}:
\begin{equation}
p^*=\min_{i\in \itset{I}} \max_{j \in \itset{J}_i, k\in \itset{K}_{ij}} p^*_{ijk}.
\label{eq:robustnessopti}
\end{equation}
\item[B3)] As we rely on nonlinear optimization, we cannot guarantee to find the global optimum $p^*$, but rather an upperbound $\hat{p}$, such that $P(R,\phi,0) = p^* \leq \hat{p}$. Given $\hat{p}$, either:
\begin{enumerate}
\item $\hat{p}<0$, hence the RTL specification is violated. In this case, given the argument $(ijk)^*$ solving \eqref{eq:robustnessopti}, we  extract an initial condition $x_0$ corresponding to \rev{$\bm{\alpha}_{(ijk)^*}$, i.e. $x_0 = z(\bm{\alpha}_{(ijk)^*},R(0))$}. For this initial condition $x_0$, a disturbance realization $\omega$ is synthesized similarly to step A1.b), i.e., GGGP is used to solve:
\begin{equation*}
\argsup\limits_{\omega}-P\left(\hat{R}_{\{(x_0,\omega)\}}^\kappa,\phi,0 \right).
\end{equation*}
The pair $(x_0,\omega)$ is subsequently added to $\mathcal{I}$. This new set $\mathcal{I}$ is then used to improve upon the synthesized controller in step A1).
\item $\hat{p} \geq 0$, hence the RTL specification is potentially satisfied. However, to guarantee this, we perform an additional verification step, based on Satisfiability Modulo Theories (SMT) solvers \cite{Barrett2009}, which are capable of verifying first-order logic formulae. The subformula \eqref{eq:subformcheck} holds if the following first-order logic formula holds:
\begin{equation}
\forall x \in R \left(j \frac{c}{2}\right): \bigvee_{a \in \itset{A}^{ijk}} \bigwedge_{b \in \itset{B}^{ijk}_a} h_{ab}^{ijk}(x) \sim 0,
\label{eq:SMTsub}
\end{equation}
where again $\sim \in \{\geq, >\}$. Suitable SMT solvers to verify \eqref{eq:SMTsub} include Z3 \cite{Z3} when $R(j c/2)$ and $h_{ab}^{ijk}$ are expressed as polynomials, and dReal \cite{Gao2013} when these are expressed as general nonlinear expressions\footnote{
dReal implements a $\delta$-complete decision procedure \cite{Gao2010}. If the reachable set is robust w.r.t. the RTL formula, this has no consequence for our proposed framework.}. Given the Boolean answers to the subformulae in \eqref{eq:subformcheck} for all $ijk$, it is trivial to compute the Boolean answer to \eqref{eq:maincheck}. 
\end{enumerate}

\end{enumerate}

\rev{
A synthesized controller, formally verified in step B3.b), solves Problem \ref{prob:1}, as formalized in the following theorem:
\begin{theorem}[Correct-by-design controller]
Given a $c$-divisible STL formula $\varphi$, an open-loop system \eqref{eq:syst}, and a sampling time $\eta$, if the algorithm in Section \ref{sec:PdSa} returns a controller before the maximum number of refinements, then the closed-loop system satisfies
\begin{equation}
    \forall \xi \in \mathcal{S}(\Sigma): (\xi,0) \models \varphi.
\end{equation}
\end{theorem}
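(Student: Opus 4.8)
The plan is to show that the only way the algorithm of Section~\ref{sec:PdSa} can return a controller is through the successful verification branch B3.b) of Section~\ref{sec:verification}, and that reaching this branch certifies $(R,0) \models \phi$ for the reachable set $R$ of the resulting closed-loop system $\Sigma$, where $\phi = \Upsilon(\varphi)$. The conclusion is then immediate from the soundness of the transformation $\Upsilon$ (Theorem~\ref{thm:stl2rtl}): its implication \eqref{eq:thm1implication} turns $(R,0) \models \phi$ into $\forall \xi \in \mathcal{S}(\Sigma): (\xi,0) \models \varphi$, which is precisely the claimed \eqref{eq:spec}. Thus the proof reduces to establishing the single link from the termination condition to $(R,0) \models \phi$.

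First I would pin down the termination condition. Inspecting step A3) and its refinement B3), a controller is returned only when $\hat{p} \geq 0$ \emph{and} the subsequent SMT verification succeeds, i.e. when the first-order formula \eqref{eq:SMTsub} is verified for every index triple $(i,j,k)$; the branch B3.a) always produces a counterexample and loops back, and the maximum-refinements branch terminates without a controller. Both of these are excluded by the theorem's hypothesis. Here it is essential to use the SMT call only as a \emph{sound} certifier: an affirmative answer to \eqref{eq:SMTsub} genuinely establishes that $\forall x \in R(j\tfrac{c}{2})$ the disjunction-of-conjunctions of predicates holds, independently of the upper bound $\hat{p}$ returned by the nonlinear optimization in B1)--B2). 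For the $\delta$-complete procedure of dReal this is exactly the robustness caveat of the corresponding footnote, which I would carry as a standing assumption.

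Next I would assemble the per-subformula certificates into $(R,0) \models \phi$. By the RTL semantics of $\mathcal{A}$, a verified \eqref{eq:SMTsub} is precisely $(R,j\tfrac{c}{2}) \models \mathcal{A}\big(\bigvee_{a} \bigwedge_{b} h^{ijk}_{ab}(x) \sim 0\big)$, hence $(R,0) \models \phi'_{ijk}$ by the semantics of $\next$ together with the definition \eqref{eq:subformcheck}. Using the conjunction and disjunction clauses of the RTL semantics and the rewriting \eqref{eq:maincheck}--\eqref{eq:RTLcheck} of $\phi$ (obtained from the equivalence $\next_a(\phi_1 \wedge \phi_2) = \next_a \phi_1 \wedge \next_a \phi_2$ and the disjunctive normal form of the $\psi_{ijk}$), the family $\{(R,0) \models \phi'_{ijk}\}_{i,j,k}$ composes to $(R,0) \models \phi$. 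The one non-logical ingredient is that the object handed to the verifier must be a \emph{valid} reachable set in the sense of the definition of a reachable set; I would supply this by invoking the soundness of the reachability tool (CORA), which returns $R$ in the form \eqref{eq:reachableset} with $R_e(t) \subseteq R(t)$ for all $t$, so that the hypotheses of Theorem~\ref{thm:stl2rtl} are met by the $R$ actually computed.

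I expect the main obstacle to be careful bookkeeping rather than deep mathematics. One must verify that evaluating the piecewise reachable set \eqref{eq:reachableset} (with partition step $t_{i+1}-t_i=c$, as required by Theorem~\ref{thm:stl2rtl}) at the shift times $t = j\tfrac{c}{2}$ returns exactly the set referenced by the RTL semantics and by \eqref{eq:subformcheck}: for even $j$ this is the point-set $\Rs_{\{t_{j/2}\}}$, and for odd $j$ the interval-set $\Rs_{(t_{(j-1)/2}, t_{(j+1)/2})}$, so that the $c/2$ granularity of the shifts in \eqref{eq:RTLstandard} is reconciled with the $c$-partition verbatim. The second delicate point is isolating the precise guarantees of the SMT step---soundness always, $\delta$-completeness only under the robustness assumption---so that the chain from ``the algorithm returns a controller'' to ``$(R,0)\models\phi$'' rests solely on the Boolean certificate \eqref{eq:SMTsub} and on the soundness of reachability, and never on the simulation- and optimization-based heuristics of Section~\ref{sec:proposal}, which govern only \emph{which} controller is proposed and not the final certificate.
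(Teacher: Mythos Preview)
Your proposal is correct and follows the same route as the paper's (much terser) proof: successful termination via branch B3.b) certifies $(R,0)\models\phi$ through the SMT check, and Theorem~\ref{thm:stl2rtl} then yields \eqref{eq:spec}. One minor point of bookkeeping: the SMT step returns a Boolean for \emph{each} subformula $\phi'_{ijk}$ and the algorithm accepts when the resulting conjunction-of-disjunctions \eqref{eq:maincheck} evaluates to true, so not every $\phi'_{ijk}$ need individually hold---but your stronger reading is harmless for the argument, and the rest of your detailing (soundness of the reachability tool, matching the $c/2$ shifts to the partition \eqref{eq:reachableset}, isolating the SMT certificate from the heuristics) is exactly what the paper's one-line proof leaves implicit.
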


\begin{proof}
If the algorithm terminates, the returned controller results in a reachable set $R$ such that $(R,0) \models \psi$, where $\phi = \Upsilon(\varphi)$. By Theorem \ref{thm:stl2rtl}, we have $\forall \xi \in \mathcal{S}(\Sigma): (\xi,0) \models \varphi$.
\end{proof}
}
\section{Dealing with conservatism}
\label{sec:conservatism}

\rev{Conservatism, in the reachability analysis and the transformation from STL to RTL, makes possible that $(R,0) \not \models \phi$, whereas $\forall \xi(0) \in I$, $(\xi,0) \models \varphi$, i.e., the desired STL specification holds for all initial conditions, whereas based on the reachability set, the RTL specification is not met. This conservatism can be reduced refining settings such as the time steps or Taylor order in the reachability tool (see \cite{Althoff2013}), or reducing the parameter $c$ to obtain less conservative RTL formulae $\phi$, at the cost of increased overall computational complexity. Similarly, truncation errors of the integration scheme, and conservatism within reachability analysis (introducing spurious trajectories) can lead to mismatches between $\hat{R}_\mathcal{I}^\kappa$ and $R(t)$. This mismatch can be bridged considering an optional error signal $\varepsilon$ added to the simulated trajectory $x(\tau_q)$, which is co-synthesized with the disturbance realizations, as will be shown in the case studies in the next section.}

Issues due to conservatism can also be dealt with within the synthesis of a candidate controller in step A1), e.g. the controllers within GGGP could be further optimized w.r.t. the robustness, such that the added robustness could potentially compensate for conservative reachability analysis. Controller complexity (measured as the the number of nonterminals) can also be used as secondary optimization criteria to facilitate less conservative reachability analysis. The resulting multi-objective optimization problem is solved using Pareto-optimality ranking~\cite{NSGA} that results in a rank that is used as the new fitness value. 

\rev{Finally, note that the optimization problems in steps A1.a), A1.b), R1), R2), B1), and B3.a) are non-convex and therefore finding a global optimum cannot be guaranteed. The optimization problems are used to propose candidate controllers or to provide counterexamples that constrain the solution space. As the goal is to find a qualitatively correct controller rather than an (quantitatively) optimal one, loss in optimality is of a lesser importance. Moreover, by the use of mutation within genetic programming, the algorithm is capable of exploring the search space, until a solution is found.}

\section{Case studies}
\label{sec:casestudies}

\begin{table}
\centering
\caption{General settings for each of the case studies. The  number of individuals, GGGP generations and CMA-ES generations are shown for each controller component and disturbance realizations.}
\smallskip
\label{tab:settings}
\scalebox{0.9}{  
\begin{tabular}{rccccccccccccc}
  \toprule
 \multirow{2}{*}{\textbf{System}}& \multirow{2}{*}{${n_\mathrm{s}}$} & \multicolumn{4}{c}{\textbf{Individuals}} & \multicolumn{4}{c}{\textbf{GGGP generations}} & \multicolumn{4}{c}{\textbf{CMA-ES generations}}  \Tstrut \\
 \cmidrule(lr){3-6}  \cmidrule(lr){7-10}  \cmidrule(lr){11-14}
 & & $u_\mathrm{ff}$ & $x_\mathrm{ref}$ & $\kappa$ & $\omega^i$&  $u_\mathrm{ff}$ & $x_\mathrm{ref}$ & $\kappa$ &  $\omega^i$ & $u_\mathrm{ff}$ & $x_\mathrm{ref}$ & $\kappa$ & $\omega^i$  \Bstrut
 \\ \midrule 
Car & 7 &  14 & 14 & 14 & 14 &  30 & 10 & 3 &3 &20 & 10 & 10 & 3\Tstrut \\
Path planning & 10 & 28 & 28 & 14 & 14 & 30 & 50 & 3 & 3 & 40 & 40 & 10 & 3 \\
Aircraft & 5 & 28& 42 & 14& 14 & 50 & 50 & 5 & 5& 40 & 60 & 10 &  3\\
Platoon & 10 & - & - & 14 & 14 & - & - & 3 & 3 & - & - & 10 & 3\\
Spacecraft & 7 & - & - & 14 & 14 & - & - & 5 & 5 & - & - & 10 & 3\Bstrut \\
 \bottomrule
  \end{tabular}%
  }
\end{table}

\begin{table}
\centering
\caption{Production rules $\mathcal{P}$.}
\label{tab:PRexamp}
\centering
\smallskip
\scalebox{0.9}{  
  \begin{tabular}{rl}
  \toprule
  $\boldsymbol{\mathcal{N}}$&\textbf{ Rules}  \Tstrut\Bstrut \\ \midrule
$\rmms{expr}{t} $ & $::= \rmms{pol}{t} ~|~ \rmms{pol}{t} \times \rmms{trig}{t} ~|~ \rmms{expr}{t} + \rmms{expr}{t}$ \Tstrut \\ 
$\rmms{trig}{t}$ & $::= \tanh(\rmms{pol}{t}) ~|~ \sin(\rmms{pol}{t}) ~|~ \cos(\rmms{pol}{t})$ \\
$\rmms{pol}{t} $ & $::= 0 ~|~ \rmm{const}  ~|~ \rmm{const}\times\rmms{mon}{t}~|~ \rmms{pol}{t} + \rmms{pol}{t} $\\
$\rmms{mon}{t}$ &$::= t ~|~ t \times \rmms{mon}{t} $ \\
$\rmms{pol}{x} $ & $::= \rmm{const}\times\rmms{mon}{x}~|~ \rmms{pol}{x} + \rmms{pol}{x} $\\
$\rmms{mon}{x}$ &$::= \rmm{var} ~|~ \rmm{var} \times \rmms{mon}{x} $ \\
$\rmm{var}$ &$::= x_1 ~|~ \dots ~|~ x_n $ \\
 $\rmm{const} $ &$::=$ Random Real $\in \left[-1,1\right]$ \Bstrut \\
 \bottomrule
  \end{tabular}%
  }
\end{table}

\rev{In this section we demonstrate the effectiveness of the proposed framework on benchmarks from competing synthesis methods, i.e. reachability-based \cite{Schuermann2017c} (car example), MPC-based \cite{Lindemann2017} (path planning), and abstraction-based \cite{Reissig2017} (airplane landing manoeuvre). Moreover, we consider the effect of input saturation, which is enforced through the STL specification. Additionally, we consider a platooning benchmark \cite{schurmann2017optimal} to investigate the scalability w.r.t. system dimension. While we use for the aforementioned benchmarks the reference-tracking controller structure discussed in Section \ref{sec:reftrack}, we demonstrate the ability to synthesize controllers from scratch on a simplified spacecraft \cite{Brockett1983} in Section \ref{sec:discovering}. 
}

The case studies are performed using an Intel Xeon CPU E5-1660 v3 3.00GHz using 14 parallel CPU cores. The GGGP algorithm is implemented in Mathematica 12 and the reachability is performed using CORA in MATLAB. \rev{Motivated by the non-convex and discontinuous nature of the optimization problems, we use population-based optimization methods, but any suitable optimization tool can be used instead. For the optimization problem in \eqref{eq:optRD}, we use particle swarm optimization of the global optimization toolbox in MATLAB. Within each generation of GGGP, parameters within an individual are optimized using Covariance Matrix Adaptation Evolution Strategy (CMA-ES) \cite{hansen2001}, based on the same fitness function as used for GGGP. More specifically, we use the variant sep-CMA-ES \cite{Ros2008}, due to its linear space and time complexity.} For the verification of \eqref{eq:SMTsub}, we use the SMT solver dReal with $\delta = 0.001$.

Across all benchmarks, the probability rate of the crossover and mutation operators being applied on a selected individual are 0.2 and 0.8, respectively. Benchmark-specific settings are shown in Table \ref{tab:settings}, which include the number of simulations ${n_\mathrm{s}}$, number of individuals, and the number of GGGP and CMA-ES generations. Note that the number of GGGP generations for $\kappa$ and $\omega^i$ is the number of generations per step A1.a) and A1.b), and not the total of GGGP generations per proposal of a controller in step A1), which depends on the number of times step A1.a) and A1.b) are repeated. For each case study, we use a grammar with nonterminals and production rules as shown in Table \ref{tab:PRexamp}. \rev{These nonterminals correspond to general time-dependent expressions $\rmms{expr}{t}$, time-dependent trigonometric functions $\rmms{trig}{t}$, time- and state-dependent polynomial expression $\rmms{pol}{t}$ and $\rmms{pol}{x}$, respectively, time- and state-dependent monomials $\rmms{mon}{t}$ and $\rmms{mon}{x}$, respectively, variables $\rmm{var}$, and constants $\rmm{const}$. The polynomials are restricted to polynomials over either time $t$ or states $x$, where the state-dependent polynomials are further restricted to not contain zero degree monomials.
The time-dependent expressions are formed by time-dependent polynomials, a product of these polynomials, and time-dependent trigonometric functions, and a sum of two expressions. The trigonometric functions are restricted to hyperbolic tangents, sines and cosines with time-dependent polynomial arguments. Note that per case study, different starting trees are used, such that potentially only a subset of the grammar is available. E.g., if the starting tree is $\rmms{pol}{t}$, candidate solutions are restricted to time-dependent polynomial solutions.}

We use Runge-Kutta as numerical integration scheme. To keep a constant number of initial conditions in $\mathcal{I}$, counterexamples are added using a first-in, first-out principle. To compensate for the gap between the simulation and the reachability analysis (as discussed in Section \ref{sec:conservatism}), we consider an added error signal bounded by the scaled vector field of the dynamics $f$, parameterized by
\begin{equation}
\varepsilon(t,x) = \delta \sigma(t) f(t,x(t),u(t),\omega(t)),
\label{eq:errorsignal}
\end{equation}
where $\delta$ is a constant and $\sigma: \R_{\geq 0} \rightarrow [-1,1]^{n \times n}$ a time-varying diagonal matrix which determines the sign and magnitude of the error signal. The constant $\delta$ is optimized after each reachability analysis such that the mismatch between the robustness measure and the approximated robustness measure is minimized, i.e.:
\begin{equation}
\arginf_\delta \left\|P(R,\phi,0) - P \left(\hat{R}^\kappa_{\{(x,\omega)\}},\phi,0 \right)\right\|,
\end{equation}
where $\{(x,\omega)\}$ is the counterexample pair computed in Section \ref{sec:verification}.

In reporting the synthesized controllers, its parameters are rounded from six to three significant numbers for space considerations.

\rev{Finally, in this section we denote the logic function indicating set membership of a set $Y$ by $\varphi_Y$, i.e. given a set $Y$ in the form
\begin{equation*}
Y := \left\{ x \in \R^n ~\middle| ~ \bigvee\nolimits_{i} \bigwedge\nolimits_{j} h_{ij}(x) \sim 0\right\}, ~ \sim \in \{\geq, > \},
\end{equation*}
where $h_{ij}:\R^n \rightarrow \R$, we have $\varphi_Y= \bigvee_{i} \bigwedge_{j} h_{ij}(x) \sim 0$.
}

\subsection{Car benchmark}

\begin{figure*}[t]
     \subfloat[]{%
     \includegraphics[width=0.24\textwidth]{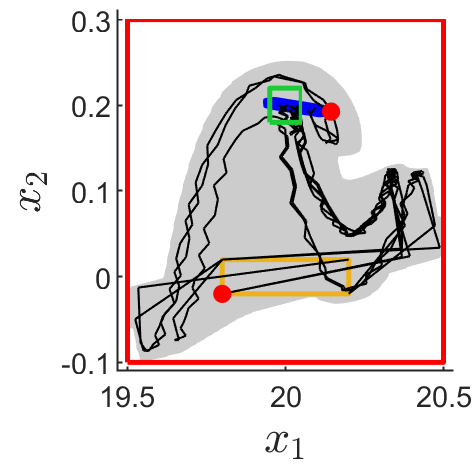}
     }
     \subfloat[]{%
     \includegraphics[width=0.24\textwidth]{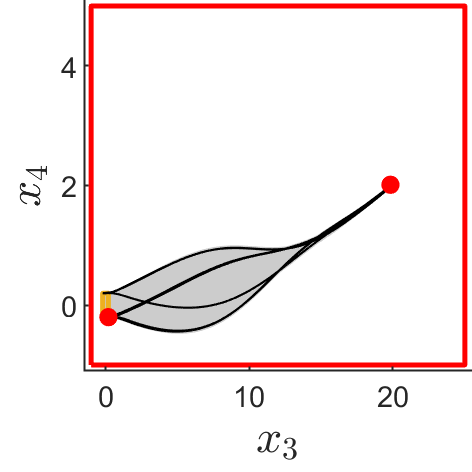}
     }
     \subfloat[]{%
     \includegraphics[width=0.24\textwidth]{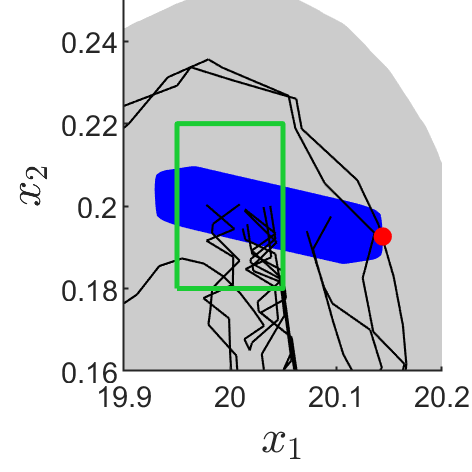}
     }
     \subfloat[]{%
     \includegraphics[width=0.24\textwidth]{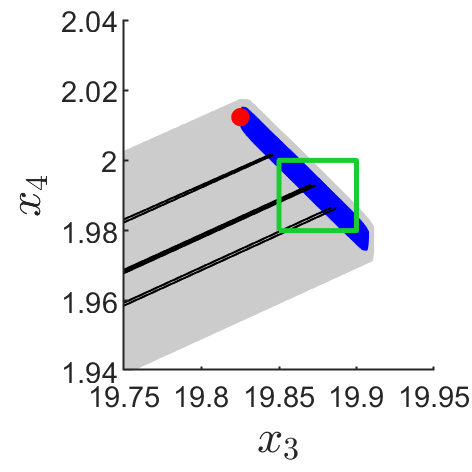}
     }
     \caption{Reachable set for the first controller for the car benchmark, which violates the desired controller specification. Figures (c) and (d) illustrate the reachable set near the goal set. Red dots: a point in the final reachable set that is outside of the goal set and its corresponding initial state, yellow: initial set, green: goal set $G$, gray: reachable set, red: safe set $S$, blue: reachable set at $t = 1$, black: example of simulation traces.}
     \label{fig:car_nosatV}      
   \end{figure*}

Let us consider a kinematic model of a car from \cite{Schuermann2017c}:
\begin{align*}
\left\{ \!  \begin{array}{ll}
f(x,u,\omega)  \! = \! ( u_1 \!+ \!w_1, u_2 \!+ \!w_1, x_1 \! \cos(x_2), x_1 \! \sin(x_2)^T,\\
I \! = \!  [19.9, 20.2] \times [-0.02,0.02] \times [-0.2,0.2]^2, \\
\Omega \! = \!   [-0.5, 0.5] \times [-0.02, 0.02].
\end{array} \right.
\end{align*}
where the states $x_1$, $x_2$, $x_3$, $x_4$ denote the velocity, orientation, and $x$ and $y$ position of the car, respectively. Furthermore, $u_1$ and $u_2$ denote the inputs and $w_1$ and $w_2$ disturbances. The sampling time $\eta$ of the sampled-data controller is set to be 0.025 seconds. 
Similarly to \cite{Schuermann2017c}, we consider a ``turn left" maneuver over a time interval $T= [0,1]$, where within $T$, the trajectories stay within the safe set $S$ and at the final time instant, the system is in the goal set, captured by the STL specification:
\begin{equation}
\varphi_1 = \always_{[0,1]} \varphi_S \wedge \always_{\{1\}} \varphi_G.
\end{equation}
We consider the following safe set $S$ and goal set $G$:
\begin{align*}
S &= [19.5, 20.5] \! \times \![-0.1, 0.3]\! \times\! [-1,25]\! \times\![-1,5], \\
G &= [19.95, 20.05] \!\times\! [0.18, 0.22] \!\times\! [19.85,19.9]\! \times\![1.98,2]. 
\end{align*}
To guide the synthesis, we impose the reference-tracking controller structure from Section \ref{sec:reftrack} and therefore we first design a feedforward signal and reference trajectory using GGGP. 
For $u_\mathrm{ff}$, $x_\mathrm{ref}$, we use polynomial expressions as a function of time $t$, for the feedback law $\kappa$ we restrict the search space to reference-tracking controllers which are linear in the tracking error and polynomial in time:
\begin{equation}
\kappa(x,t) = u_\mathrm{ff}(t) + K(t)(x - x_\mathrm{ref}(t)),
\label{eq:controllerform}
\end{equation}
and for $\omega^i$ we consider saturated polynomials in time. This is done using the grammar with starting trees:
\begin{align*}
\mathcal{S}_{u_\mathrm{ff}} &= (\rmms{pol}{t},\rmms{pol}{t})^T,~\mathcal{S}_{x_{\mathrm{ref},i}}= \rmms{pol}{t}, \\
\mathcal{S}_{\kappa} &= u_\mathrm{ff} + \begin{pmatrix}
\rmms{pol}{t}, \dots, \rmms{pol}{t} \\
\rmms{pol}{t}, \dots, \rmms{pol}{t}
\end{pmatrix} (x-x_\mathrm{ref}), \\
\mathcal{S}_{\omega^i} &= \left(\mathrm{sat}_{(\underline{\omega}_1, \overline{\omega}_1)} ( \rmms{pol}{t}), \mathrm{sat}_{(\underline{\omega}_2, \overline{\omega}_2)} ( \rmms{pol}{t})\right)^T.
\end{align*}
Here, $\mathrm{sat}_{(\underline{\omega}_i, \overline{\omega}_i)} $ denotes a saturation function such that $\omega^i(t) \in \Omega$, where
$\mathrm{sat}_{(\underline{\omega}_i, \overline{\omega}_i)}(x) = \max(\underline{\omega}_i, \min(x, \overline{\omega}_i)).$
Finally, for each disturbance realization, we co-evolve the error signal $\varepsilon^i$ in \eqref{eq:errorsignal}, which is dependent on the candidate controller $\kappa$ and disturbance realization $\omega^i$:
\begin{align*}
\mathcal{S}_{\varepsilon^i} &= \delta \sigma f(t,x,\kappa(x),\omega^i),\\
\sigma &= \mathrm{diag}(\mathrm{sat}_{(-1,1)}(\rmms{pol}{t}), \dots, \mathrm{sat}_{(-1,1)}(\rmms{pol}{t})),
\end{align*}%
where $\mathrm{diag}$ denotes a diagonal matrix.
For the simulations and reachability analysis, we use a sampling time of $0.025$ seconds and $0.0125$ seconds, respectively.

First, a feedforward control input and reference trajectory for a nominal initial condition are synthesized as described in Section \ref{sec:reftrack}. An example of a found feedforward controller and corresponding reference trajectory are shown in Table \ref{tab:controllers}.
For 10 independent runs, the average synthesis time of $u_\mathrm{ff}$ and the reference trajectory per dimension $x_{\mathrm{ref},i}$ is shown in Table \ref{tab:results}. Using these $u_\mathrm{ff}$ and $x_\mathrm{ref}$ as building blocks for the controller, $\kappa$ is synthesized as described in step A1). An example of a synthesized $K(t)$ in \eqref{eq:controllerform} is given by
\begin{align*}
K(t) = \begin{pmatrix}
-41.5 &	-6.48 t^2	&-84.3958&	9.45 \\
3.58&	-30.1&	-8.22&	3.62 t1-49.2 t^2
\end{pmatrix}.
\end{align*}
The corresponding reachable set is shown in Figure \ref{fig:car_nosatV}. We observe that the final reachable set is not within the goal set. The red dots represent the violation and the corresponding initial condition. After refining the controller iteratively, an example of a controller satisfying $\varphi_1$ after 3 refinements is shown in Table \ref{tab:controllers}.

For 10 independent synthesis runs of $\kappa$, statistics on the number of generations, number of refinements, complexity in terms of number of non-terminals, and computation time is shown in Tables \ref{tab:results} and \ref{tab:results}and Figure \ref{fig:refincomparison}. In most cases, a solution was obtained around 3 refinements. However, due to the stochastic nature of the approach, in one case it took 20 refinements before a solution was found.

\subsection{Input saturation}
\label{sec:STLinput}

In our general framework, we do not canonically consider input saturation. Input saturation can be considered in multiple ways, such as restricting the grammar of the controller to include a saturation function, or even a continuous approximation using e.g. a sigmoid function. However, the downside of such an approach is that the reachability analysis under these functions is typically challenging for state-of-the-art reachability tools, due to the strong nonlinearity or hybrid nature. Instead, for illustrative purposes, we incorporate the constraint within the STL specification, such that for all states in the reachable set the saturation bounds are not exceeded. Let us revisit the car benchmark, where we consider the same input constraints as in \cite{Schuermann2017c}, namely $u \in \overline{U} = [-9.81,9.81]\times [-0.4, 0.4]$. The STL specification is extended to:
\begin{equation}
\varphi_2 = \varphi_1 \wedge \always_{[0,1]} \varphi_U
\end{equation}
with 
\begin{equation}
U = \left\{x \in \R^n ~|~ \kappa(x) \in \overline{U} \right\}.
\label{eq:phiu}
\end{equation}

The synthesis statistics are shown in Tables \ref{tab:results} and \ref{tab:results} and Figure \ref{fig:refincomparison}. An example of a synthesized $K(t)$ in  \eqref{eq:controllerform} is given by
\begin{equation*}
K(t) = \begin{pmatrix}
-18.1+18.2 t-65.9 t^6	& 0.22 t	 \\
0	& -8.26-41.8 t \\
-29.6-48.7 t & 0 \\
-11.2 t &	-33.1 t^2
\end{pmatrix}^T.
\end{equation*}
In most cases, a solution was found in around 4 to 5 refinements, with the exceptions of two runs with 20 and 40 refinements, respectively.

\subsection{Path planning for simple robot}

\begin{figure}
\centering%
 \subfloat[]{%
     \includegraphics[width =0.3\textwidth]{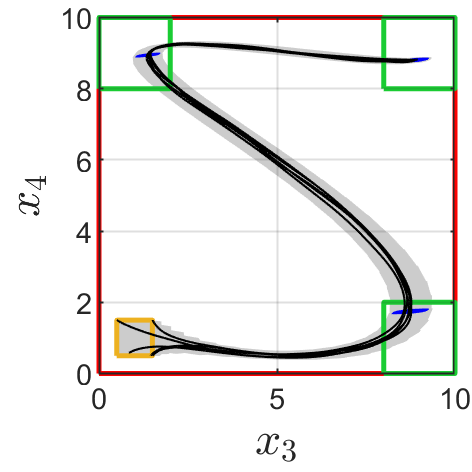}
     }
     \subfloat[]{%
     \includegraphics[width =0.3\textwidth]{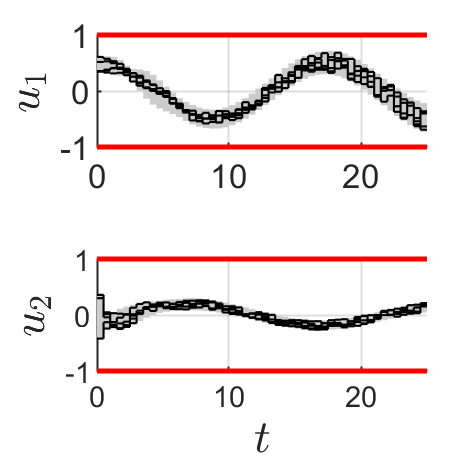}
     }
\caption{Reachable set of a found controller for the path planning benchmark. (a) Reachable set of the $x$-$y$ position. (b) Reachable set of the input over time. Yellow: initial set, gray: reachable set, red: safe set $S$ and input constraints, green: target sets $P_1$, $P_2$, and $P_3$, black: selection of simulated trajectories, blue: reachable sets at certain time instances within one of the target sets.}
\label{fig:reachpp}
\end{figure}

Let us consider the path-planning problem for a simple robot adopted from \cite{Lindemann2017}. We deviate from \cite{Lindemann2017} in considering the system in continuous time and consider bounded disturbances. The system is described by:
\begin{align*}
\left\{ \begin{array}{ll} f(x,u,\omega) = (u_1+w_1, u_2+ w_2, x_1, x_2)^T, \\
I = \{0\}^2 \times [0.5,1.5]^2,~
\Omega = [-0.05,0.05]^ 2,
\end{array} \right.
\end{align*}
where the state vector represents the $x$-velocity, $y$-velocity, $x$-position and $y$-position, respectively. The sampling time of the sampled-data controller $\eta$ is set to be 0.5 seconds. 
Similar to \cite{Lindemann2017}, we consider the specification in which the system needs to remain in a safe set $S$ and eventually visit regions $P_1$, $P_2$ and $P_3$:
\begin{equation}
\varphi'  = \always_{[0,25]} \phi_S \wedge \event_{[5,25]} \phi_{P_1} \wedge \event_{[5,25]} \phi_{P_2} \wedge \event_{[5,25]} \phi_{P_3},
\end{equation}
with $S = \{ x\in \R^n \mid (x_3, x_4) \in [0,10]^2\}$, $P_1 = \{ x \in \R^n \mid (x_3, x_4) \in [8,10]^2\}$, $P_2 = \{ x \in \R^n \mid (x_3, x_4) \in [8,10]\times [0,2]\} $, $P_3 = \{ x  \in \R^n\mid (x_3, x_4) \in [0,2] \times [8,10]\}$. In \cite{Lindemann2017}, the input is constrained s.t. $u \in \overline{U} = [-1,1]^ 2$. Similar to Section \ref{sec:STLinput}, we impose this constraint through the STL specification, yielding the following STL specification:
\begin{equation}
\varphi = \varphi'  \wedge \always_{[0,25]}\varphi_U,
\end{equation}
where $U$ is given by \eqref{eq:phiu}. We consider the same controller structure and grammar as the previous benchmark, with the exception of the grammar of the feedfoward input and reference trajectory. For these elements, we extend the grammar to expressions which can include trigonometric functions, by using the grammar in Table \ref{tab:PRexamp} and the starting trees $\mathcal{S}_{u_\mathrm{ff}} = (\rmms{expr}{t},\rmms{expr}{t})$ and $\mathcal{S}_{x_{\mathrm{ref},i}}= \rmms{expr}{t}$.
For the simulations and reachability analysis, we use a sampling time of $0.5$ seconds. The statistics on the synthesis is again shown in Tables \ref{tab:results} and \ref{tab:results} and Figure \ref{fig:refincomparison}.
An example of the controller elements $u_\mathrm{ff}$, $x_\mathrm{ref}$ and $K(t)$ of a synthesized controller are shown in Table \ref{tab:controllers}. The corresponding reachable set of the state and input is shown in Figure \ref{fig:reachpp}. Across 10 independent runs, commonly in 1 to 2 refinements a solution was found, with one run requiring 8 refinement.

\subsection{Landing maneuver}

\begin{figure}
\centering
\includegraphics[width = 0.6\linewidth]{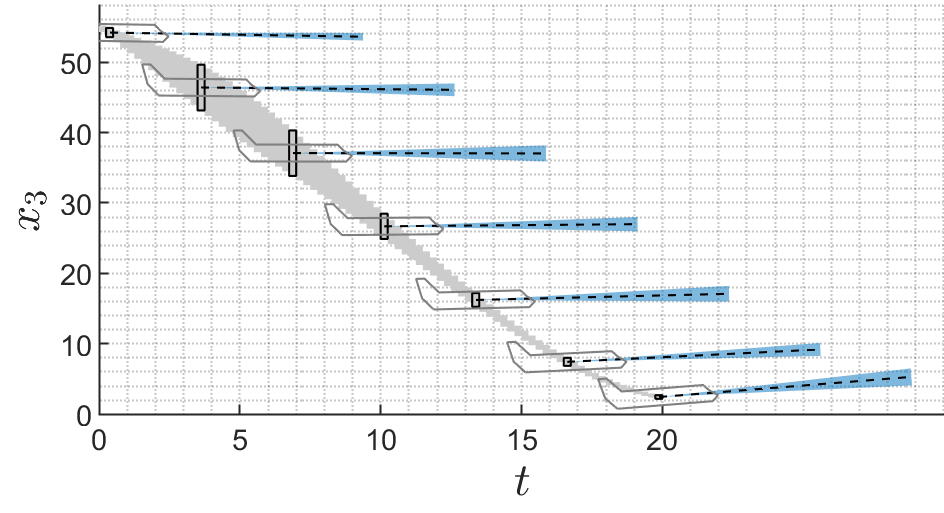}
\caption{Time evolution of the reachable set of the altitude $x_3$ under a synthesized controller for the landing maneuver. Gray: Reachable set over time of the altitude $x_3$. Blue: the set of the aircraft pitch $x_2 +u_2$ for 7 time intervals.}
\label{fig:aircraft}
\end{figure}

Let us consider the landing aircraft maneuver, adopted from \cite{Reissig2017}. The system model is given by
\begin{equation*}
\left\{ \!  \begin{array}{l}
f(x,\nu,\omega) \! =\!\! \begin{pmatrix}
\! \frac{1}{m}(\nu_1 \! \cos \nu_2 \! -\! D(\nu_2, x_1) - mg \! \sin x_2) \\
\! \frac{1}{m x_1} ( \nu_1  \!\sin \nu_2\! + \!  L(\nu_2,x_1) -mg \! \cos x_2)\!\! \\
x_1 \sin x_2
\end{pmatrix} \! \!,\\
D(\nu_2,x_1) = (2.7 +3.08 (1.15 +4.2 \nu_2)^2)x_1^2,\\
L(\nu_2, x_1) = (68.6 (1.25 + 4.2 \nu_2)) x_1^2,\\
\nu_i = u_i + w_i,~~i = 1,2,\\
I = [80,82] \times [-2^\circ,-1^\circ] \times \{55\} \\
\Omega = [-5 \cdot 10^3 ,-5 \cdot 10^3 ] \times [-0.25^\circ , 0.25^\circ],
\end{array}
\right.
\end{equation*}
where the states $x_1$, $x_2$, $x_3$ denote the velocity, flight path angle and the altitude of the aircraft, $\nu_i$ denotes a disturbed input, where $u_1$ denotes the thrust of the engines and $u_2$ the angle of attack. Finally, $D(\nu,x_1)$ and $L(\nu,x_1)$ denote the lift and drag, respectively, and $m = 60 \cdot 10^3$ kg, $g = 9.81 \mathrm{m}/\mathrm{s}^2$. The sampling time of the sampled-data controller is set at $\eta =0.25$ seconds. Compared to \cite{Reissig2017}, we do not consider measurement errors, but the proposed framework can be adapted arbitrarily to accommodate this type of disturbance. We define the following safe set, goal set and input bounds:
\begin{align*}
S = &[58, 83] \times [-3^\circ, 0^\circ] \times [0,56],\\
G =& [63, 75] \times ([-2^\circ, -1^\circ] \times [0,2.5]) \\
& \cap \{ x\in \R^3 \mid x_1 \sin x_2 \geq -0.91 \}, \\
\overline{U} =& [0, 160 \cdot 10^3] \times [0^\circ,10^\circ],
\end{align*}
and consider the following specification:
\begin{equation}
\varphi = (\varphi_S \wedge \varphi_U) \mathcal{U}_{[18,20]} \varphi_G,
\end{equation}
where the set $U$ is given by \eqref{eq:phiu}. That is, trajectories are always within the safe set and satisfy the input constraints, until between 18 and 20 seconds the goal set is reached. 

We use the same controller structure and grammar as the path-planning problem. For the simulations and reachability analysis, we use a sampling time of $0.25$ seconds. The algorithm settings are shown in Table \ref{tab:settings}. The statistics of 10 independent synthesis runs are again shown in Tables \ref{tab:results} and \ref{tab:results} and Figure \ref{fig:refincomparison}. An example of the controller elements $u_\mathrm{ff}$, $x_\mathrm{ref}$ and $K(t)$ of a synthesized controller are shown in Table \ref{tab:controllers}. The corresponding reachable set of the altitude over time, as well as the reachable sets of the pitch angles at multiple time instances are shown in Figure \ref{fig:aircraft}.

\rev{
\subsection{Scalability: platoon}
Consider a platooning system \cite{schurmann2017optimal}, described by:
\begin{align*}
\left\{ \begin{array}{ll} 
f(x,u,w) = (f_1(x,u,w), \dots, f_N(x,u,w))^T, \\
f_i(x,u,w) = ( x_{2i},  \delta u_i + \delta w_i), ~ i = \{1, \dots, N\}, \\
\delta u_1 = u_1,~\delta w_1 = w_1,  \\
\delta u_i = u_{i-1}-u_i,~ w_i = w_{i-1} - w_i, ~i = \{2, \dots N\}, \\
I = [-0.2,0.2]\times[19.8,20.8]\\
\hspace{0.7cm} \times ([0.8,1.2] \times[-0.2,0.2])^{N-1}, \\
\Omega =[-1,1]^N,
\end{array} \right.
\end{align*}

where $N$ denotes the number of vehicles, $x_1$, $x_2$ the position and velocity of the first vehicle, $x_{2i}$, $x_{2i+1}$ the relative position and relative velocities of vehicle $i$, and the input $u_i$ denotes the acceleration of vehicle $i$. We consider a sampling time $\eta$ of 0.05 seconds. The specification involves the acceleration of the platoon up to a goal set within a second, subjected to input constraints, while each vehicle maintains a safe distance, which is captured by the STL formula
\begin{equation}
    \varphi = \always_{[0,1]} \left(\varphi_\mathrm{s} \wedge \varphi_{\overline{U}} \right) \wedge \always_{\{1\}} \varphi_G, 
\end{equation}
where $\varphi_\mathrm{s} = \bigwedge_{i=2}^N x_{2i-1} \geq 0$, $ G = [20.8,21.2]\times[21.5,22.5]\times ([0.8,1.2] \times[-0.2,0.2])^{N-1}$, and $\overline{U} = [-10,10]^N$. 
We use the feed forward signal $u_{ff,i} =  1.4$ for all $i \in \{1,\dots,N\}$, where we exploit the fact that the desired control input for the agents after the first one is equal to the first one. 
The reference trajectory for this feedforward controller is given by:
\begin{equation*}
    x_\mathrm{ref}(t) = ( 20.3 t + 0.7 t^2, 20.3+1.4t, 1, 0, \dots, 1, 0)^T.
\end{equation*}
We impose the structure $\kappa_i(x,t) = \kappa_{i-1}(x,t)-\kappa_{i}'(x,t)$ for $i = 2, \dots, N$, where for $\kappa_i(x,t)$ and $\kappa_i'(x,t)$ we use the same control the same controller structure and grammar as the path-planning problem. For the simulations and reachability analysis, we use a sampling time of $0.05$ seconds. The algorithm settings are shown in Table \ref{tab:settings}. 

Given a maximum of 5000 GGGP generations, for $N = 2$, in 10 independent runs a controller was found, for $N =3$, in 9 out of 10 runs, and for $N =4$, no solutions were found. The results statistics for the successful synthesis runs are again again shown in Tables \ref{tab:results} and \ref{tab:results} and Figure \ref{fig:refincomparison}.
}

\rev{%
\subsection{Discovering structures from scratch: spacecraft}
\label{sec:discovering}
Let us consider a simplified model of spacecraft \cite{Brockett1983}, described by:
\begin{align*}
\left\{ \begin{array}{l}
    f(x,u,w) = \begin{pmatrix}
    u_1 +w_1, & u_2+w_2 & x_1 x_2 
    \end{pmatrix}^T,\\
    I = [-0.5,0.5]^2 \times [1,2],~ \Omega = [-0.1,0,1]^2, 
    \end{array} \right.
\end{align*}
where states denote the angular velocity and the inputs the control torques aligned with the principle axes. We consider a sampling time $\eta$ of $0.1$ seconds. For the simulations and reachability analysis, we use a sampling time of $0.1$ seconds.
Note that for stabilization, linearization methods are not appropriate, as the system linearized around points in the set $\{x \in \R^3 \mid x_1, x_2 = 0\}$ are not controllable. The goal is to control the system in finite time to a set around the origin $G = [-0.2,0.2]^3$ and the control input is constrained s.t. $u \in \overline{U} = [-5,5]^2$, which is captured by the STL specification 
\begin{equation*}
    \varphi = \always_{[0,5]}\phi_U \wedge \always_{\{5\}} \varphi_G.
\end{equation*}
For the disturbance, we use the same grammar as in the previous case studies. For the controller we consider polynomial state feedback controllers  This is done using the starting tree $\mathcal{S}_u = (\rmms{pol}{x}, \rmms{pol}{x})$. The algorithm settings are shown in Table \ref{tab:settings}. The results of 10 independent synthesis runs are again shown in Tables \ref{tab:results} and \ref{tab:results} and Figure \ref{fig:refincomparison}. An example of a found controller is given by
\begin{equation*}
    \kappa(x) = (-2.056 x_1-2.233 x_3, -2.034 x_2+2.071 x_3)^T.
\end{equation*}
Of the 10 synthesized controllers, 8 controllers have the same structure as the above controller.

}

\begin{table*}[t]
\caption{Statistics over an average of 10 independent synthesis runs. Total gen.: total number of GGGP generations for $\kappa$ before a solution was found; Total ref.: total number of refinements; Complexity: number of total non-terminals within the genotype of the synthesized controller.}
\label{tab:results1}
\centering
\smallskip
\scalebox{0.9}{
  \begin{tabular}{rcccccccccccccccccccccc}
  \toprule
  \multirow{2}{*}{\textbf{System}} & \multicolumn{3}{c}{\textbf{Total gen.}}& \multicolumn{3}{c}{\textbf{Total ref.}} & \multicolumn{3}{c}{\textbf{Complexity}} \Tstrut \\
    \cmidrule(lr){2-4}  \cmidrule(lr){5-7}  \cmidrule(lr){8-10} \\
& min & med & max & min & med & max & min & med & max  \Bstrut \\ \midrule
Car &63& 205.5& 1410& 3& 6& 19& 14& 27& 69\Tstrut\\
Constrained car& 84& 318& 933& 2& 5& 8& 24& 35.5& 56\\
Path planning &3& 16.5& 117& 1& 2.5& 9& 8& 11.5& 15 \\
Aircraft & 45& 342.5& 1165& 2& 5& 16& 24& 36& 58\Bstrut \\
Platoon N = 2 &  4& 64.5& 171& 1& 2& 7& 12& 21.5& 42\\
Platoon N = 3 & 522& 1611& 3210& 4& 5& 8& 30& 34& 56& \\
Spacecraft &  5& 67.5& 1350& 1& 2.5& 24& 14& 14& 22\\

 \bottomrule
  \end{tabular}%
  }
\end{table*}

\begin{table*}[t]
\caption{Statistics over an average of 10 independent synthesis runs. Time FF: average computation time of the feedforward components; Time: total time of the controller synthesis (excluding
the feedforward synthesis), GP $\kappa$: synthesis of candidate $\kappa$ using GGGP; GP $\omega$: disturbance realization optimization; RA: reachability analysis; CE: counterexample extraction; SMT: verifying the specification through an SMT solver; min: minimum; med: median; max: maximum. The average contribution percentages do not sum up to one, as the contribution of routines such as writing (SMT) files are not displayed.}
\label{tab:results}
\centering
\smallskip
\scalebox{0.9}{
  \begin{tabular}{rcccccccccccccccccccccc}
  \toprule
  \multirow{2}{*}{\textbf{System}} & \multicolumn{2}{c}{\textbf{Time FF [s]}}& \multicolumn{3}{c}{\textbf{Time [min]}} & \multicolumn{5}{c}{\textbf{Average contribution to total time $[\%]$}}  \Tstrut \\
   \cmidrule(lr){2-3}  \cmidrule(lr){4-6}  \cmidrule(lr){7-11}
& $u_\mathrm{ff}$ & $x_{\mathrm{ref},i}$& min & med & max & GP $\kappa$ & GP $\omega$ & RA & CE & SMT   \Bstrut \\ \midrule
Car &45.1 &1.2 & 16.5& 41.6& 204.1& 37.9& 26.2& 3.15& 19.3& 3.44\Tstrut\\
Constrained car & - & -&  28.0& 61.2& 117.0& 42.5& 17.2& 1.70& 15.8& 9.19\\
Path planning & 254.0 & 19.1 & 14.1& 23.8& 61.8& 7.61& 9.50& 3.05& 17.2& 27.8  \\
Aircraft &708.2  & 46.2 & 44.0& 165.1& 422.8& 36.7& 22.5& 12.9& 10.3& 7.71\Bstrut \\
Platoon N = 2&  - & - &  3.44&
9.30& 30.7& 29.0& 33.0& 3.42& 20.5& 1.48\\
Platoon N = 3 &  - & -&  67.6& \
207.9& 398.6& 60.7& 32.4 & 0.851& 3.12& \
0.403\\
Spacecraft &  - & - &   3.79& 
22.35 & 378.3& 25.0& 37.3& 2.15 & 18.13& 1.28\\

 \bottomrule
  \end{tabular}%
  }
\end{table*}

\begin{table*}[t]
\centering
\caption{Examples of synthesized controllers. Numerical values are rounded for space considerations.}
\label{tab:controllers}
\smallskip
\scalebox{0.6}{  
  \begin{tabular}{rccc}
  \toprule
  \textbf{System} & \textbf{Car (without input constraints)} & \textbf{Path planning} & \textbf{Aircraft}  \Tstrut \rule[-0.9ex]{0pt}{0pt} \\
   \midrule
$u_\mathrm{ff}$ & 
$\begin{pmatrix}
 0.01835 \\ 0.1995
 \end{pmatrix}$& 
 $\begin{pmatrix}
0.500 \cos (0.362 t+0.0733)\\
-0.190 \sin (0.678\, -0.324 t) 
\end{pmatrix}$ &  $\begin{pmatrix}
255.68 +107.57 t^2\\
0.00956 +0.00419 t
\end{pmatrix}$    \rule{0pt}{4.5ex}\\
  $x_\mathrm{ref}$ & 
 $\begin{pmatrix}
 19.999 + 0.020567 t \\ 0.19954 t \\
  19.981 t - 0.10838 t^4\\ 1.9915 t^2.
  \end{pmatrix}$
  & $\begin{pmatrix}
0.03 t-3.81 \cos (0.361 t)+4.72\\
1.38 \sin (0.361 t)+0.024\\
0.406 t+1.88 \cos (0.312 t+0.949),\\
0.427 -0.583 \cos (0.765\, -0.325 t)
\end{pmatrix}$ & $\begin{pmatrix}
81.5 -0.380 t-1.28 \sin(0.393 +0.164 t) \\
(-0.164-1.59 \cdot 10^{-3} t) \cos(0.103 t)+0.138 \cos(0.120 t) \\
55.7 -0.674 \cos(0.354 t)-2.96 t \sin(0.788 +0.062 t)
\end{pmatrix}$ \rule{0pt}{7ex}\\
  $K(t)$
  &
  $\begin{pmatrix}
 -43.4	&3.94	&-89.6& 	307.3 t^2\\
 -8.28 t^5 &	-33.3&	-6.21&	-10.1
 \end{pmatrix}$
  &$\left(
\begin{array}{cccc}
 -0.264 & -0.125 t & 0 & 0.209 t \\
 0 & -0.204 & -0.781 & -1.35 \\
\end{array}
\right)$ & $\begin{pmatrix}
-2.67 t^3 &	-0.407-0.0636 t	& -0.788-0.461 t \\
-0.00607 &	-0.0217-0.237 t-0.0348 t^2	&-0.00023t^2
\end{pmatrix}$  \rule{0pt}{4.5ex} \rule[-2ex]{0pt}{0ex} \\
 \bottomrule
  \end{tabular}%
  }
\end{table*}
\section{Discussion}
\begin{figure*}[t]
   \subfloat[]{%
     \includegraphics[scale =0.8]{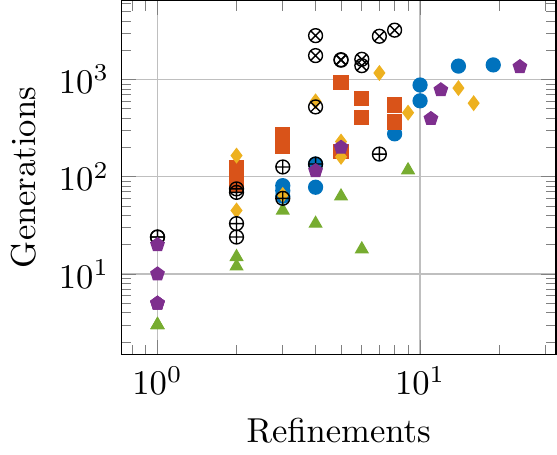}
     \label{fig:refvsgens}
     }
     \hfill
       \subfloat[]{%
     \includegraphics[scale =0.8]{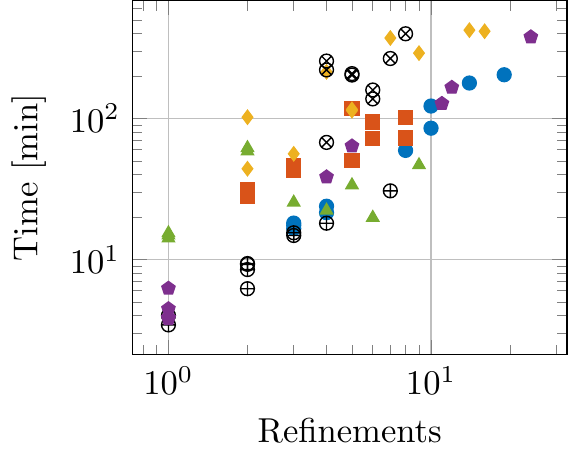}
     \label{fig:refsvstime}     
     }   
     \hfill
     \subfloat[]{%
     \includegraphics[scale =0.8]{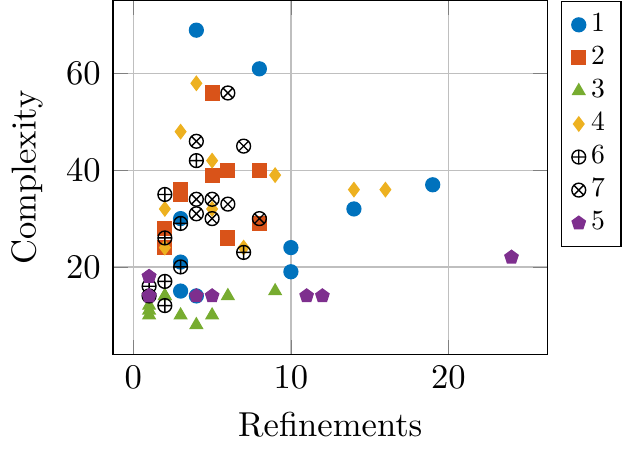}
     \label{fig:refsvscomp}
     }
     \caption{Number of refinements versus (a) number of GGGP generations, (b) time in minutes, and (c) complexity of the controller, measured in number of non-terminals, for systems 1) car, 2) constrained car, 3) path planning, 4) aircraft, 5) platoon $N =2$, 6) platoon $N=3$, 7) spacecraft.}
     \label{fig:refincomparison}      
\end{figure*}
We discuss now the results from Section \ref{sec:casestudies} and relate them to the results in the literature. 
Recall that a GGGP \textit{generation} is the cycle of creating a new population through fitness evaluation, selection and applying genetic operators. A \textit{refinement} is defined as the cycle of proposing a candidate solution based on GGGP, validation using reachability analysis, and extracting counterexamples. Therefore, in each refinement, there are one or multiple GGGP generations. First of all, Figure \ref{fig:refvsgens} shows a polynomial relation between the number of refinements and the total number of GGGP generations. Secondly, Figure \ref{fig:refsvstime} shows a polynomial relation between the number of refinements versus the total computation time. Finally, Figure \ref{fig:refsvscomp} illustrates that more refinements does not imply that complexity of the controller increases. However, the complexity of the found controller does seem to be dependent on the system and STL specification.

While the computation time is related to the number of refinements, this relationship depends on the STL specification and the dynamics. For the car benchmark without and with input constraints, we observe that the added constraints within the STL specification increased the required number of generations, and typically required more time per refinement. Hence, the total computation time heavily depends on the STL specification, as expected. Additionally, we observe an increase in the median of the complexity of the resulting controllers. 

\rev{With the platoon example, we see that for an increase in state dimension the number of generations required to find a solution significantly increases. This is expected, as the search space is significantly larger. For $N =4$, no solutions were found within 5000 GGGP generations. However, it is worth nothing that the optimal solution for $N = 4$ in \cite{schurmann2017optimal} very tightly satisfies specification. However, general conclusions regarding computation time and system order cannot be drawn.} For example, the input-constraint car and path planning benchmarks are both four-dimensional systems, where the STL specification of the latter is more involved. Regardless, the path-planning problem has a lower computation time and requires less generations and number of refinements, indicating a dependency between the computation time and the dynamics of the system, which is also as expected.

\rev{The resulting offline time complexity of our proposal is clearly higher than alternative methods like the one in \cite{Schuermann2017c}, with synthesis time for the car benchmark of just around 10 seconds, or in \cite{Reissig2017}, taking about 700 seconds for the aircraft benchmark controller synthesis. 
Both these alternative methods are faster offline, at the cost of potentially much larger controllers to be stored: a linear controller for each sampling time in \cite{Schuermann2017c}; an exponentially growing number of entries in a look-up table with increasing system dimension in discretization-based methods like that of \cite{Reissig2017}.
However, for the systems for which synthesis is successful in both ours and discretization-based methods, the size of the resulting controllers do not seem prohibitive. Nonetheless, a more clear advantage of our approach is the ability to constraint the controller structure so as to produce controllers that are easier to \emph{understand} by end-users than e.g. the look-up-tables (or BDDs) of discretization-based methods.}

\rev{An alternative to alleviate the memory footprint of controllers is the use of MPC approaches, such as \cite{Lindemann2017} for the path-planning problem. These solutions require additional online computational complexity compared to the quick evaluation that our controllers enable. As an example, the most complex of our aircraft controllers just requires on average $4.7 \cdot 10^{-7}$ seconds to compute the control actions (evaluated using $\mathtt{timeit}$ in MATLAB running on an Intel i7-8750H CPU).}

\rev{In summary, the tests performed indicate that this approach may be competitive with respect to competing alternatives whenever the application at hand requires both low memory and computational footprint, but more importantly whenever there is a need to impose specific controller structures to improve interpretability of the controller.}

\rev{As most automated synthesis methods for the type of problems we handle, the proposed framework is not a complete method. That is, the method is not guaranteed to find a solution in a finite number of iterations, regardless of its existence. Nevertheless, for the presented case studies, in 10 independent runs a solution was always found. Since the search space is navigated nondeterministically, we observed that the number of GGGP generations, number of refinements and computation time can vary significantly for each run.}

\rev{As it has been highlighted, the offline computation time of our current implementation is not competitive with those in other references. Note however, that the performance of our implementation has not been optimized for speed, being a mix of Matlab and Mathematica code, or parallelization of individuals in GGGP, which consume most of the computation, c.f. Table \ref{tab:results}.}
\rev{Additionally, limiting the fragment of STL, e.g., to $h(s)$ linear, the robustness degree computation can be considerably simplified. If the robustness measure is also upper bounded in a non-conservative manner, the usage of SMT solvers becomes redundant. This would significantly reduce the computation time for benchmarks such as the path-planning problem. 
Finally, input constraints, currently part of the STL formula, can also be captured using saturation functions in our grammar, simplifying the synthesis. However, as a caveat, discontinuous functions such as saturation functions significantly complicate the reachability analysis.}

\section{Conclusion}
We have proposed a framework for CEGIS-based correct-by-design controller synthesis for STL specifications based on reachability analysis and GGGP. The effectiveness has been demonstrated based on a selection of case studies. While the synthesis time is outmatched by methods solving similar problems, the proposed method results in a compact closed-form analytic controller which is provably correct when implemented in a sampled-data fashion. This enables the implementation in embedded hardware with limited memory and computation resources.

\appendix
\section{Proof of Theorem \ref{thm:soundness}}
\label{app:proofs}
Theorem \ref{thm:soundness} is proven by induction over the structure of the RTL formula $\phi$ and subformula $\psi$. This is only done for the first statement in Theorem \ref{thm:soundness}, as the second statement is logically equivalent to the first, i.e.:
\begin{align*}
P(R, \phi,t) \!>\! 0\! \Rightarrow (R,t)\! \models\! \phi \equiv (R,t)\! \not \models \! \phi \! \Rightarrow \! P(R, \phi,t) \! \leq \! 0,\\
(R,t) \! \models \!  \phi \!\Rightarrow \! P(R, \phi,t) \! \geq \! 0 \equiv P(R, \phi,t)  \!<\! 0 \! \Rightarrow \! (R,t) \not\models  \phi.
\end{align*}

\begin{itemize}[leftmargin=*]
\item \textbf{Case $\psi =\true$:} By definition $x \models \psi$ and $\varrho(x,\psi)>0$.
\item \textbf{Case $\psi = h(x) \geq 0$:} For this formula $\psi$, the quantitative semantics is given by $\varrho(x,\psi) = h(x)$. \textbf{(i)} If $\varrho(x,\psi) > 0$, then $h(x) >0$, thus from the semantics it follows that $x \models \psi$. \textbf{(ii)} If  $x \models \psi$, then from the semantics we have $h(x) \geq 0$, thus from the quantitative semantics it follows that $\varrho(x,\psi) \geq 0$. 
\item \textbf{Case $\psi = \neg \psi_1$:} For this formula $\psi$, the quantitative semantics is given by $\varrho(x,\neg \psi_1) = -\varrho(x,\psi_1)$. \textbf{(i)} If $\varrho(x, \neg \psi_1) > 0$, then $\varrho(x,\psi_1) < 0$. By the induction hypothesis, we get $x \not \models \psi_1$ and thus from the semantics it follows that $x\models \neg \psi_1$. \textbf{(ii)} If $x \models \neg \psi_1$, then from the semantics we have $x \not \models \psi_1$. By the induction hypothesis and the equivalence $\varrho(x,\psi) > 0 \Rightarrow x \models \psi \equiv x \not \models \psi
\Rightarrow \varrho(x,\psi) \leq 0$, we get $\varrho(x,\psi_1) \leq 0$, thus $\varrho(x, \neg \psi_1)\geq 0$. 
\item \textbf{Case  $\psi = \psi_1 \wedge \psi_2$:} For this formula $\psi$, the quantitative semantics is given by $\varrho(x,\psi_1 \wedge \psi_2) = \min(\varrho(x,\psi_1),$ $\varrho(x,\psi_2))$.
\textbf{(i)} If $\varrho(x, \psi_1 \wedge \psi_2)\! > 0$, then $ \varrho(x,\psi_1) > 0$ and $\varrho(x,\psi_2)>0$. By the induction hypothesis, we get $x \models \psi_1$ and $x \models \psi_2$, thus from the semantics it follows that $x \models \psi_1 \wedge \psi_2$. \textbf{(ii)} If $x \models \psi_1 \wedge \psi_2$, then from the semantics we have $x \models \psi_1$ and $x \models \psi_2$. By the induction hypothesis, we get $ \varrho(x,\psi_1) \geq 0$ and $\varrho(x,\psi_2)\geq 0$, thus $\varrho(x, \psi_1 \wedge \psi_2)\geq 0$.
\item \textbf{Case $\phi = \mathcal{A}\psi$:} For this formula $\phi$, the quantitative semantics is given by $P(R,\mathcal{A}\psi,t) = \inf_{x\in R(t)} \varrho(x,\psi)$.
\textbf{(i)} If $P(R,\mathcal{A}\psi,t) >0$, then $\forall x \in R(t): \varrho(x,\psi) > 0$. By the induction hypothesis, $\forall x \in R(t): x \models \psi$, thus from the semantics we have $(R,t) \models \mathcal{A} \psi$. \textbf{(ii)} If  $(R,t) \models \mathcal{A} \psi$, then from the semantics we have $\forall x \in R(t): x \models \psi$. By the induction hypothesis, we get $\forall x\in R(t): \varrho(x,\psi) \geq 0$, thus $P(R,\mathcal{A}\psi,t) \geq 0$.
\item \textbf{Case $\phi =\phi_1 \vee \phi_2$:} For this formula $\phi$, the quantitative semantics is given by $P(R,\phi_1 \vee \phi_2,t) = \max(P(R,\phi_1,t),P(R,\phi_2,t))$. 
\textbf{(i)} If $P(R,\phi_1 \vee \phi_2,t) > 0$, then $ P(R,\phi_1,t) > 0$ or $P(R,\phi_2,t)>0$. By the induction hypothesis, we get $(R,t) \models \phi_1$ or $(R,t) \models \phi_2$, thus from the semantics it follows that $(R,t) \models \phi_1 \vee \phi_2$. \textbf{(ii)} If $(R,t) \models \phi_1 \vee \phi_2$, then from the semantics we have $(R,t) \models \phi_1$ or $(R,t) \models \phi_2$. By the induction hypothesis, we get $P(R,\phi_1,t) \geq 0$ or $P(R,\phi_2,t)\geq 0$, thus $P(R,\phi_1 \vee \phi_2,t) \geq 0$.
\item \textbf{Case $\phi = \phi_1 \wedge \phi_2$:} For this formula $\phi$, the quantitative semantics is given by $P(R,\phi_1 \wedge \phi_2,t) = \min(P(R,\phi_1,t),P(R,\phi_2,t)$. 
\textbf{(i)} If $P(R,\phi_1 \wedge \phi_2,t) > 0$, then $P(R,\phi_1,t) > 0$ and $P(R,\phi_2,t)>0$. By the induction hypothesis, we get $(R,t) \models \phi_1$ and $(R,t) \models \phi_2$, thus from the semantics it follows that $(R,t) \models \phi_1 \wedge \phi_2$. \textbf{(ii)} If $(R,t) \models \phi_1 \wedge \phi_2$, then from the semantics we have $(R,t) \models \phi_1$ and $(R,t) \models \phi_2$. By the induction hypothesis, we get $P(R,\phi_1,t) \geq 0$ and $P(R,\phi_2,t)\geq 0$, thus $P(R,\phi_1 \wedge \phi_2,t) \geq 0$.
\item \textbf{Case $\phi = \next_a \phi_1$:} For this formula $\phi$, the quantitative semantics is given by $P(R,\next_a \phi_1,t) = P(R,\phi,t+a)$. \textbf{(i)} If $P(R,\next_a \phi_1,t) >0$, then $P(R,\phi_1,t+a)>0$. By the induction hypothesis, we get $(R,t+a) \models \phi_1$, thus from the semantics we have $(R,t) \models \next_a \phi_1$. \textbf{(ii)} If  $(R,t) \models \next_a \phi_1$, then from the semantics we have $(R,t+a) \models \phi_1$. By the induction hypothesis, we get $P(R,\next_a \phi_1,t) \geq 0$.
\qed
\end{itemize}

\section*{Acknowledgment}
The authors would like to thank Bastian Sch\"urmann for fruitful discussions on this work. 

\bibliographystyle{IEEEtran}
\bibliography{MyBib}             
                            
\end{document}